\definecolor{darkred}{rgb}{0.6,0.0,0.1}
\definecolor{darkgreen}{rgb}{0,0.5,0}
\definecolor{darkblue}{rgb}{0,0,0.5}
\def\Diag{\mathop{\rm Diag}\nolimits}%
\newcommand{\Gw}{\upsilon}
\newcommand{\Lw}{\Lambda}
\newcommand{\Gd}{d}
\newcommand{\kstar}{m_*}
\newcommand{\ukstar}{\underline{\kstar}}
\newcommand{\dstar}{\delta_n^*}
\newcommand{\bw}{b}
\newcommand{\br}{\rho}
\newcommand{\hw}{\omega}
\def\cov{\mathop{\rm cov}\nolimits}%
\def\Ex{{\mathbb E}}%
\newcommand{\skalarV}[1]{\langle #1\rangle}   
\DeclareMathOperator{\norm}{\lVert\cdot\rVert}   
\newcommand{\normV}[1]{\lVert#1\rVert}   
\newcommand{\N}{{\mathbb N}}
\newcommand{\R}{{\mathbb R}}
\newcommand{\Z}{{\mathbb Z}}
\newcommand{\1}{{\mathbbm 1}}
\newcommand{\cE}{{\cal E}}
\newcommand{\cN}{{\cal N}}
\newcommand{\cR}{{\cal R}}
\newcommand{\cW}{{\cal W}}
\newcommand{\cX}{{\cal X}}
\newcommand{\um}{\underline{m}}
\definecolor{dgreen}{rgb}{0,0.5,0}
\definecolor{dblue}{rgb}{0,0,0.9}
\definecolor{dred}{rgb}{0.6,0.0,0.1}
\definecolor{dgold}{rgb}{0.5,0.3,0.0}
\definecolor{dvio}{rgb}{0.6,0.3,0.5}
\definecolor{gray}{rgb}{0.5,0.5,0.5}
\newcommand{\dr}{\color{dred}}
\newtheoremstyle{mysc}
  {3pt}
  {3pt}
  {\it}
  {}
  {\color{darkred}\sc}
  {.}
  {.5em}
  {}
\newtheoremstyle{myex}
  {10pt}
  {10pt}
  {\rm}
  {}
  {\color{darkred}\sc}
  {.}
  {.5em}
  {}
\theoremstyle{mysc}\newtheorem{prop}{Proposition}[section]
\theoremstyle{mysc}\newtheorem{assumption}{Assumption}[section]
\theoremstyle{mysc}\newtheorem{coro}[prop]{Corollary}
\theoremstyle{mysc}\newtheorem{theo}[prop]{Theorem}
\theoremstyle{mysc}
\theoremstyle{mysc}\newtheorem{lem}[prop]{Lemma}
\theoremstyle{myex}\newtheorem{rem}{Remark}[section]
\theoremstyle{myex}
\theoremstyle{myex}
\numberwithin{equation}{section}
\author{{\sc Herv\'e Cardot}\thanks{Universit\'e de Bourgogne, Institut de Math\'ematiques de Bourgogne, 9 Av. Alain Savary, 21078 Dijon Cedex, France, e-mail: \texttt{herve.cardot@u-bourgogne.fr}}
	\and{\sc Jan Johannes}\thanks{Universit\"at Heidelberg, Institut f\"ur Angewandte Mathematik, Im Neuenheimer Feld, 294, D-69120 Heidelberg, Germany, e-mail: \texttt{johannes@statlab.uni-heidelberg.de}}}
\title{{\bf Thresholding Projection Estimators in Functional Linear Models.}}
\begin{document}
\date{December 17, 2008}
\maketitle

\begin{abstract} %
We consider the problem of estimating the regression function in  functional linear regression models by proposing a new type of  projection estimators which combine dimension reduction and thresholding. The introduction of a threshold rule allows to get consistency under broad assumptions as well as minimax rates of convergence under additional regularity hypotheses. We also consider the particular case of Sobolev spaces generated by the trigonometric basis which permits to get easily mean squared error of prediction as well as estimators of the derivatives of the regression function. We prove these estimators are   minimax and rates of convergence are given for some particular cases. 

\end{abstract}
\begin{tabbing}
\noindent \emph{Keywords:} \=Derivatives estimation, Galerkin method, Linear inverse problem,  \\
\>Mean squared error of prediction, Optimal rate of convergence, \\
\>Hilbert scale, Sobolev Space.\\[.2ex]
\noindent\emph{AMS 2000 subject classifications:} Primary 62J05; secondary 62G20, 62G08.
\end{tabbing}

\section{Introduction}\label{sec:intro}
Functional data analysis (\cite{RamsaySilverman2005}, \cite{FerratyVieu2006}) is a topic of growing interest in statistics and many applications in chemometrics (\cite{FrankFriedman93}), finance (\cite{PredaSaporta2005}), biometry or climatology (\cite{BesseCardotStephenson2000}) are now dealing with the functional linear model. This model is useful to estimate or predict a scalar random variable, say $Y \in \R$, thanks to a random function denoted by $X.$
We assume in the following that  $Y$ and $X$ are centered random variables and, without loss of generality, that the random function $X$ takes  values  in $L^2[0,1]$, the space of square integrable functions defined on $[0,1]$ endowed with its usual inner product $\skalarV{f,g}=\int_0^1f(t){g(t)}dt$ and associated norm $\normV{f}=\skalarV{f,f}^{1/2},$ $f,g\in L^2[0,1].$ The functional linear model is then defined by
\begin{equation}\label{intro:e1}
Y=\int_{0}^1\beta(t)X(t)dt +\sigma\epsilon,\quad\sigma>0,
\end{equation}
where the function $\beta(t)$ is called the regression or slope function and the error term $\epsilon$ is supposed to be centered $\Ex (\epsilon)=0$ and  not correlated with $X$: $\forall \ t \in [0,1], \ \Ex (X(t)\epsilon)=0.$

Assuming that $X$ has a finite second moment, i.e. $\Ex \normV{X}^2 = \int_0^1\Ex|X(t)|^2dt <\infty,$  one can define the covariance operator of $X,$ say $\Gamma.$ This operator is defined on $L^2[0,1]$ as follows: for any function $f \in L^2[0,1],$ 
\begin{equation}\label{cov-op}
\Gamma f(s)  = \int_0^1 \cov(X(t),X(s)) f(t) \ dt, \ \forall s \in [0,1]. 
\end{equation}
It is well known (see \textit{e.g.} \cite{CardotFerratySarda99}) that the regression function $\beta$ satisfies the following moment equation
\begin{equation}
\label{method:e1}g(s):=\Ex[YX(s)]=[\Gamma \beta](s),\quad s\in[0,1],
\end{equation}
where $g$ belongs to $L^2[0,1].$  Since $\Gamma$ is a non negative nuclear operator (\cite{DauxoisPousseRomain82}) a continuous generalized inverse of $\Gamma$ does not exist as long as the range of the operator $\Gamma$ is an infinite dimensional subspace of $L^2[0,1].$ Consequently  inverting equation (\ref{method:e1}) to recover $\beta$ can be seen as an ill posed inverse problem.
\cite{CardotFerratySarda2003}  provides   a   necessary and sufficient condition for the existence of a unique solution of equation (\ref{method:e1}) 
\begin{assumption}\label{ass:ident} 
The covariance operator $\Gamma$ of the random function $X$ is injective and the function $g=\Ex[YX]$ belongs to the range $\cR(\Gamma)$ of $\Gamma$.
\end{assumption} 
Under this assumption, the covariance operator $\Gamma$  admits a discrete spectral decomposition given by a  sequence $(\lambda_j)_{j\in\N}$ of strictly positive eigenvalues 
and a sequence of corresponding orthonormal eigenfunctions $\{\phi_j\}_{j\in\N}.$ Then,  the normal equation \eqref{method:e1} can be rewritten as follows%
\begin{equation}\label{form:e1}
\beta = \sum_{j\in\N} \frac{g_j}{\lambda_j}\cdot \phi_j\quad\mbox{ with }g_j:=\skalarV{g,\phi_j},\; j\in\N . 
\end{equation}
It is well-known that, even in case of a-priori known  eigenvalues $\{\lambda_j\}$ and eigenfunctions $\{\phi_j\},$ replacing in (\ref{form:e1}) the unknown function $g$ by a consistent estimator $\widehat{g}$ does in general not lead  to a consistent estimator  of $\beta$. To be more precise, since the sequence $(\lambda_j)_{j\in\N}$ tends to zero, $\Ex\normV{\widehat{g}- g}^2=o(1)$ does generally not imply $\sum_{j\in\N} |\lambda_j|^{-2}\cdot\Ex|\skalarV{\widehat{g}-g,\phi_j}|^2 =o(1)$. Consequently, the estimation in functional linear model is called ill-posed and  additional regularity assumptions on the regression function $\beta$ are necessary in order to obtain a uniform rate of convergence (c.f. \cite{EHN00}).

The objective is to estimate the regression function $\beta, $ as well as its derivatives, when observing a sample $(Y_i,X_i)$ of $n$ i.i.d realizations of $(Y,X).$ We can define the empirical estimators of  $g$ and $\Gamma$ respectively as follows
\begin{equation}\label{set:e3}
\widehat{g}:=\frac{1}{n}\sum_{i=1}^n Y_iX_i \quad \mbox{and} \quad \widehat{\Gamma}:=\frac{1}{n}\sum_{i=1}^n \skalarV{X_i,\cdot}X_i \ .
\end{equation}

The main class of  estimation procedures studied in the statistical literature are based on principal components regression and consist in reducing the dimension by inverting equation (\ref{method:e1}) in the finite dimension space generated by the eigenfunctions of $\widehat{\Gamma}$ associated to the  largest eigenvalues  (see \textit{e.g.} \cite{Bosq2000}, \cite{FrankFriedman93}, \cite{CardotFerratySarda99},  \cite{CardotMasSarda2007} or \cite{MullerStadtmuller2005} in the context of generalized linear models). 

The second important class of estimators relies on minimizing a penalized least squares criterion  which can be seen as  generalization of the ridge regression. 
\cite{MarxEilers99} and \cite{CardotFerratySarda2003} proposed B-splines expansion of the regression function with a penalty dealing with the squared norm of a fixed order derivative of the estimators. More recently \cite{CrambesKneipSarda2007} proposed a spline smoothing decomposition with the same type of penalty and proved the optimality of their  estimators according to a criterion that can be interpreted as a squared error of prediction. Note that this question has given rise recently to numerous publications in the machine learning community with similar ideas  based on reproducing kernel Hilbert spaces (RKHS) and Tikhonov regularization (see e.g. \cite{SmaleZhou07}, \cite{BauerPR07} and references therein).

Borrowing ideas from the inverse problems community (\cite{EfromovichKoltchinskii2001} and \cite{HoffmannReiss2008}) we propose in this article a new class of estimators which rely on dimension reduction  by projecting the data onto some basis of orthonormal functions and threshold techniques that allow to control the accuracy of the estimator. More precisely, let us consider a set of orthonormal functions such as wavelet or trigonometric basis denoted by  $\{\psi_1,\dotsc,\psi_m, \dotsc\}$ which forms a basis of $L^2[0,1].$
Given a dimension $m\geq 1,$ 
we denote by $[\widehat{\Gamma}]_{\um}$ the $m\times m$ matrix with generic elements $\skalarV{\widehat{\Gamma} \psi_{\ell},\psi_{j}}, j,\ell=1, \ldots, m$ and by $[\widehat{g}]_{\um}$ the $m$ vector with elements $\skalarV{\widehat{g},\psi_{\ell}}, \ell=1, \ldots, m.$ We can first remark, that the least squares estimator of $\beta$ obtained with  the projections of the  $X_i$ onto $\Psi_m,$  the subspace of $L^2[0,1]$ spanned by the functions $\{\psi_1,\dotsc,\psi_m\},$ is simply given, when $[\widehat{\Gamma}]_{\um}$ is non singular, by $([\widehat{\Gamma}]_{\um}^{-1} [\widehat{g}]_{\um})^t  [\psi]_{\um}(\cdot)$  where $[\psi]_{\um}(\cdot) = (\psi_{1}(\cdot), \ldots, \psi_{m}(\cdot))^t.$
Our estimator, in its simplest form, consists in thresholding this projection estimator when, roughly speaking, the norm of the inverse of the matrix $[\widehat{\Gamma}]_{\um}$ is too large. More precisely, introducing a threshold value $\gamma$ which will depend on  $m$ and $n$ we propose to estimate $\beta$ as follows
\begin{equation}\label{beta:s0}
\widehat{\beta}(t)  =\sum_{\ell=1}^m \widehat{\beta}_{\ell}\cdot \1{\{ \normV{ [\widehat{\Gamma}]^{-1}_{\um}}\leq \gamma \} }\cdot \psi_{\ell}(t), \quad t \in [0,1],
\end{equation}
where the  $\widehat{\beta}_{\ell}$ are the generic elements of the vector of coordinates  obtained by least squares projection and   $\1$ is the indicator function. This new thresholding  step can be seen  as an improvement of the estimator proposed by \cite{RamsayDalzell1991} which was built by projecting the data onto finite dimensional basis of functions. From an inverse problems perspective this approach is similar to the linear Galerkin procedure (\cite{Natterer1977} or \cite{EHN00}) defined as follows, $\beta^{m}\in\Psi_m$ denotes a Galerkin solution of the operator equation $g=\Gamma\beta$ when  
\begin{equation}\label{app:unknown:Galerkin}
\normV{g-\Gamma\beta^m}\leqslant  \normV{g-\Gamma\tilde\beta},\quad\forall \tilde \beta\in\Psi_m.
\end{equation}
Since $\Gamma$ is strictly positive it follows that $\beta^m=[\beta^m]_{\um}^t[\psi]_{\um}(\cdot)$ with $[\beta^m]_{\um}=[\Gamma]_{\um}^{-1}[g]_{\um}$ is the unique Galerkin solution satisfying $[\Gamma(\beta-\beta^m)]_{\um}=0$.    It has the advantage compared to principal components regression that it does not necessitate to estimate the eigenfunctions of the empirical covariance operator.

We will consider a large class of weighted norms to evaluate the asymptotic rates of converge of the thresholded projection estimators.  
For $f \in L^2[0,1],$ we define
\begin{equation}
\normV{f}_\hw^2 =: \sum_{j=1}^\infty \hw_j |\skalarV{f,\psi_j}|^2
\end{equation}
for  some strictly positive sequence of weights  $(\hw_j)_{j\in\N}$. Then, the performance of the estimator $\widehat{\beta}$ of $\beta$  is evaluated according to the risk $\Ex\|\widehat{\beta}-\beta\|_{\hw}^2,$ called $\cW_{\hw}$-risk in the following, which is simply the $L^2[0,1]$-risk when $\hw_j =1$ for all $j \in \N.$  
This general framework allows us with appropriate choices of the weight sequence $\hw$ to cover the estimation  of derivatives of $\beta$ as well as the optimal estimation with respect to the mean squared prediction error. Indeed,   the prediction  error of a new value of $Y$ 
given any random function $X_{n+1}$ possessing the same distribution as $X$ and being
independent of $X_1,\dotsc,X_n$ can be evaluated as follows (see for example \cite{CardotFerratySarda2003} or \cite{CrambesKneipSarda2007} for similar setups)
\begin{equation*}
\Ex \Bigl[ \left| \int_0^1 \widehat\beta(s) X_{n+1}(s)ds - \int_0^1 \beta(s) X_{n+1}(s)ds \right|^2\, \Bigl\vert\, \widehat\beta\Bigr]=  \langle \Gamma (\widehat\beta-\beta), (\widehat\beta-\beta) \rangle.
\end{equation*}
Consequently, if we suppose, now for sake of simplicity, that the functions $\psi_j$ are also the eigenfunctions $\phi_j$ of operator $\Gamma$ then it is clear that choosing $\hw_j=  \lambda_j$  leads  to  evaluate, according to the $\hw$-norm, the mean squared prediction error of the estimator.

\medskip

The paper is organized a follows. In section 2, we fix notations  and  we first derive consistency of the estimator in the general case under broad moment assumptions and then prove minimax results under some additional regularity assumptions based on a link condition between the operator $\Gamma$ and the basis  $\{\psi_j\}.$ Section 3 is devoted to  the particular case of trigonometric basis and focuses  on finitely and infinitely smoothing operator $\Gamma$ as well as different regularity conditions for the function $\beta.$ We first consider the case of mean squared prediction error and get asymptotic rates of convergence which are comparable to those of \cite{CrambesKneipSarda2007} in the polynomial case. One remarkable result is that for the exponential case, one can attain the parametric rates up to a power of a $\log n$ factor. Rates of convergence for the function itself and its derivatives are also given. They are similar to those obtained by \cite{HallHorowitz2007} in the case of the estimation of the function itself. 
Finally, a brief section 4 presents the concluding remarks and some perspectives. The proofs are gathered in the Appendix.

\section{Asymptotic properties, the general case}\label{sec:gen}

\subsection{Notations and assumptions.} 
We assume from now on that  the regression  function $\beta$ belongs to some ellipsoid $\cW_{\bw}^\br$, $\br>0$, defined as follows
\begin{equation}\label{sob-scale}
 \cW_\bw^\br := \{f\in L^2[0,1]: \sum_{j=1}^\infty \bw_j |\skalarV{f,\psi_j}|^2=:\normV{f}_\bw^2\leq \br\},\end{equation}
where  $\{\psi_j,j\in\N\}$ is as before some orthonormal basis in $L^2[0,1]$ not necessarily corresponding to the eigenfunctions of $\Gamma$, and the sequence of weights $(\bw_j)_{j\in\N}$ is  non-decreasing. 
Here $\cW_\bw^\br$ captures all the prior information (such as the smoothness) about the unknown slope function $\beta$. 

\paragraph{Matrix and operator notations.} Given $m\geqslant 1$, $\Psi_m$ denotes the subspace of $L^2[0,1]$ spanned by the functions $\{\psi_1,\dotsc,\psi_m\}$. $\Pi_m$ and $\Pi_m^\perp$ denote the orthogonal projections on  $\Psi_m$ and its orthogonal complement $\Psi_m^\perp$ respectively. Given an operator (matrix) $K$, $\normV{K}_{\hw}$ denotes its operator $\cW_\hw$-norm, i.e. $\normV{K}_{\hw}:= \sup_{\normV{f}_{\hw}=1} \normV{Kf}_{\hw}$. 
The inverse operator (matrix) of $K$ is denoted by $K^{-1}$, the adjoint (transposed) operator (matrix) of $K$ by $K^t$. 
The identity operator (matrix) is denoted by $I$. For  a vector $v$ and a matrix $K$, the upper $m$ subvector and $m\times m$ sub-matrix is denoted by $[v]_{\um}$ and $[K]_{\um}$ and its entries by $v_{i}$ and $K_{i,j}$ respectively. The diagonal   matrix with entries $v$ is denoted by $\Diag(v)$. $[f]$ and $[K]$ denote the (infinite) vector and matrix of the function $f$ and the operator $K$ with the entries $[f]_{i}=\skalarV{f,\psi_i}$ and $[K]_{i,j}=\skalarV{K\psi_j,\psi_i}$ respectively.  Clearly, $[\Pi_m f]_{\um} =[f]_{\um}$ and if we restrict $\Pi_m K \Pi_m$ to an  operator from $\Psi_m$ into itself, then it has the matrix $[K]_{\um}$. Moreover, $\Pi_{m} f = [f]_{\um}^t [\psi]_{\um}(\cdot)$ and $\Pi_{m} K \Pi_{m} f= [f]_{\um}^t [K]_{\um}[\psi]_{\um}(\cdot)$ with $[\psi]_{\um}(\cdot) = (\psi_1(\cdot), \ldots, \psi_{m}(\cdot))^t$. 

Consider the covariance operator $\Gamma$. We assume throughout the paper that $\Gamma$ is strictly positive definite and hence
the matrix $[\Gamma]_{\um}$ is nonsingular for all $m\in\N$, so that $[\Gamma]_{\um}^{-1}$ always exists. Under this assumption the notation $\Gamma_m^{-1}$ is used for the operator from $L^2[0,1]$ into itself, whose matrix in the basis $\{\psi_j\}$  has the entries $([\Gamma]_{\um}^{-1})_{i,j}$ for $1\leqslant i,j\leqslant m$ and zeroes otherwise.

\paragraph{Moment assumptions.}
The results  derived below involve  additional  conditions on the moments of the  random function $X$, which we formalize now.  Here and subsequently, we denote by $\cX$  the set of all centered   random functions $X$ with finite second moment, i.e., $\Ex\normV{X}^2<\infty$, and strictly positive covariance operator.  Given  $X\in\cX$ consider the random vector $[X]_{\um}$, then its entries $[X]_{j}= \skalarV{X,\psi_j}$ have mean zero and  variance  $[\Gamma]_{j,j}=\skalarV{\Gamma \psi_j,\psi_j}$, but they are not uncorrelated. In fact, $[\Gamma]_{\um}$ is the covariance matrix of $[X]_{\um}$. Since $\Gamma$ is strictly positive definite it follows that $[\Gamma]_{\um}$ is non singular. Therefore,  the random vector $[\Gamma]_{\um}^{-1/2} [X]_{\um}$ has mean zero and identity $I_m$ as covariance matrix. Then we denote by $\cX^{k}_{\eta}$, $k\in\N$, $\eta\geqslant1$, the subset of $\cX$  containing only  random functions $X$ with uniformly bounded  $k$-th moment of the corresponding random variables   $[X]_j/[\Gamma]_{j,j}^{1/2},$ $j\in\N$, and $ ([\Gamma]_{\um}^{-1/2} [X]_{\um})_{j}$, $1\leqslant j\leqslant m,$ $m\in\N$, that is
\begin{multline}\label{form:def:cX}
\cX^{k}_{\eta}:=\Bigl\{ X\in\cX\;\text{ such that }\quad\sup_{j\in\N} \Ex\Bigl|{[X]_j}/{[\Gamma]_{j,j}^{1/2}}\Bigr|^k \leqslant \eta\\\hfill\mbox{ and }\sup_{m\in\N} \sup_{1\leqslant j\leqslant m}  \Ex\Bigl|([\Gamma]_{\um}^{-1/2} [X]_{\um})_{j}\Bigr|^k \leqslant \eta\Bigr\}.
\end{multline}
It is worth noting that in case $X\in\cX$ is a Gaussian random function the corresponding random variables   $[X]_j/[\Gamma]_{j,j}^{1/2},$ $j\in\N$ and  $ ([\Gamma]_{\um}^{-1/2} [X]_{\um})_{j}$, $1\leqslant j\leqslant m,$ $m\in\N$, are Gaussian with mean zero and variance one. Hence, for each $k\in\N$ there exists $\eta$ such that any Gaussian random function $X\in \cX$ belongs also to $\cX^{k}_{\eta}$. Furthermore, in what follows, $\cE^k_\eta$ stands for the set of all centered error terms $\epsilon$ with variance one and finite $k$-th moment, i.e., $\Ex|\epsilon|^k\leqslant \eta$.

\subsection{Consistency.}
 The $\cW_{\hw}$-risk of $\widehat{\beta}$  is essentially determined by the deviation of the estimators of $[g]_{\um}$ and $[\Gamma]_{\um}$,  and  by the regularization error due  to the  projection. 
 The next assertion summarizes then minimal conditions to ensure consistency of  $\widehat{\beta}$ proposed in
\eqref{beta:s0}. 

\begin{prop}\label{gen:prop1}\dr Assume an $n$-sample of $(Y,X)$ satisfying \eqref{intro:e1} with $\sigma>0$. Let $\beta\in \cW_\hw$, $X\in\cX^4_{\eta}$ and $\epsilon\in\cE^4_\eta$, $\eta\geqslant 1$.    Consider the estimator $\widehat{\beta}$  with parameter $m:=m(n)$ and threshold  $\gamma:=\gamma(n)$ are chosen  such that 
$\gamma \geqslant2\normV{[{\Gamma}]^{-1}_{\um}}$ and suppose, as  $n\to\infty,$ that $1/m=o(1)$,   $\gamma \,  (m/n)\, \sup_{1\leqslant j\leqslant m}\{ \hw_j\} =o(1)$, $(m^2/n)=o(1)$ and $\gamma^2   \, (m^{3}/n^{1+1/2})=O(1)$. 
If in addition  $\sup_{m\in\N}\normV{{\Gamma}^{-1}_{m} \Pi_m \Gamma \Pi_m^\perp}_\hw<\infty$, then 
 $\Ex\normV{\widehat{\beta}-\beta}^2_\hw=o(1)$ as $n\to\infty$.
\end{prop}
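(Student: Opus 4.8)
The plan is to compare the thresholded estimator $\widehat\beta$ with the deterministic Galerkin solution $\beta^m\in\Psi_m$ (coordinates $[\beta^m]_{\um}=[\Gamma]^{-1}_{\um}[g]_{\um}$) and to split the risk into a stochastic part and a regularization part,
\[
\Ex\normV{\widehat\beta-\beta}_\hw^2\leqslant 2\,\Ex\normV{\widehat\beta-\beta^m}_\hw^2+2\normV{\beta^m-\beta}_\hw^2 .
\]
I would dispatch the second, deterministic term first. The Galerkin identity $[\Gamma(\beta-\beta^m)]_{\um}=0$ yields the exact decomposition $\beta-\beta^m=\Pi_m^\perp\beta-\Gamma_m^{-1}\Pi_m\Gamma\Pi_m^\perp\beta$, and since $\Pi_m^\perp$ is idempotent this gives
\[
\normV{\beta^m-\beta}_\hw\leqslant\bigl(1+\sup_{m}\normV{\Gamma_m^{-1}\Pi_m\Gamma\Pi_m^\perp}_\hw\bigr)\,\normV{\Pi_m^\perp\beta}_\hw .
\]
As $\beta\in\cW_\hw$, the tail $\normV{\Pi_m^\perp\beta}_\hw^2=\sum_{j>m}\hw_j|\skalarV{\beta,\psi_j}|^2\to 0$ as $m\to\infty$ (using $1/m=o(1)$), while the bracketed factor is bounded by the link assumption, so the regularization term is $o(1)$. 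As a by-product $\sup_m\normV{\beta^m}_\hw<\infty$, which I will use later.

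For the stochastic term I introduce the event $A:=\{\normV{[\widehat\Gamma]^{-1}_{\um}}\leqslant\gamma\}$, so that in coordinates $\widehat\beta=\1_{A}\,[\widehat\Gamma]^{-1}_{\um}[\widehat g]_{\um}$ and
\[
\Ex\normV{\widehat\beta-\beta^m}_\hw^2\leqslant 2\,\Ex\bigl[\1_{A}\normV{[\widehat\Gamma]^{-1}_{\um}[\widehat g]_{\um}-[\beta^m]_{\um}}_\hw^2\bigr]+2\normV{\beta^m}_\hw^2\,\P(A^c).
\]
On $A$ I would use the algebraic identity $[\widehat\Gamma]^{-1}_{\um}[\widehat g]_{\um}-[\beta^m]_{\um}=[\widehat\Gamma]^{-1}_{\um}\bigl([\widehat g]_{\um}-[\widehat\Gamma]_{\um}[\beta^m]_{\um}\bigr)$ and observe that the $i$-th entry of the residual vector equals $\tfrac1n\sum_k[X_k]_i\bigl(\skalarV{X_k,\beta-\beta^m}+\sigma\epsilon_k\bigr)$, whose mean $[\Gamma(\beta-\beta^m)]_i$ vanishes for $i\leqslant m$ by the Galerkin property. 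The threshold then caps the random inverse, $\1_{A}\normV{[\widehat\Gamma]^{-1}_{\um}v}_\hw\leqslant\gamma\normV{v}_\hw$, so this contribution is at most $\gamma^2$ times the weighted second moment of a centered empirical mean. The moment assumptions $\cX^4_\eta$ and $\cE^4_\eta$ reduce that second moment coordinatewise to $[\Gamma]_{j,j}$ (through the whitened fourth moments), and the rate conditions $\gamma\,(m/n)\sup_{j\leqslant m}\hw_j=o(1)$ and $m^2/n=o(1)$ are what force the product to be $o(1)$.

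It remains to control $\P(A^c)$. Since $\gamma\geqslant2\normV{[\Gamma]^{-1}_{\um}}$, a Neumann-series perturbation argument gives the inclusion $A^c\subseteq\{\normV{[\Gamma]^{-1}_{\um}}\,\normV{[\widehat\Gamma]_{\um}-[\Gamma]_{\um}}>1/2\}$, whence by Markov's inequality $\P(A^c)\lesssim\gamma^2\,\Ex\normV{[\widehat\Gamma]_{\um}-[\Gamma]_{\um}}^2$; the fourth-moment control in $\cX^4_\eta$ bounds the expected squared Frobenius (hence operator) norm of the centered empirical covariance matrix, and the conditions $m^2/n=o(1)$ and $\gamma^2(m^3/n^{3/2})=O(1)$ are designed to make both $\P(A^c)$ and $\normV{\beta^m}_\hw^2\,\P(A^c)$ vanish. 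The genuine obstacle throughout is the random matrix inverse $[\widehat\Gamma]^{-1}_{\um}$: without the indicator it need not even be integrable, and it is precisely the role of the threshold to replace it by a deterministic bound on $A$ while paying only the controllable price $\P(A^c)$ off $A$. The crux is therefore the simultaneous calibration of $m$, $\gamma$ and the weights so that the capped variance term and the tail probability vanish together — the dimension $m$ entering through the operator-norm fluctuation of the $m\times m$ matrix $[\widehat\Gamma]_{\um}-[\Gamma]_{\um}$ — which is exactly what the four asymptotic conditions encode; the remaining moment bookkeeping is routine once this structure is in place.
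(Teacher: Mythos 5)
Your overall architecture --- Galerkin bias plus stochastic error plus the price of the threshold failing, with the bias handled through $\beta-\beta^m=\Pi_m^\perp\beta-\Gamma_m^{-1}\Pi_m\Gamma\Pi_m^\perp\beta$ and the event $A^c$ handled through a Neumann-series inclusion --- is exactly the paper's. The gap is quantitative and sits in the two places where you bound the randomness without first whitening by $[\Gamma]_{\um}^{-1/2}$; in both places the stated asymptotic conditions are not strong enough to absorb the resulting powers of $\gamma$. First, capping the random inverse on $A$ gives (after correctly converting the Euclidean threshold into the $\hw$-norm, which costs a factor $\sup_{j\leqslant m}\hw_j$ rather than the ratio of weighted norms you write) a main term of order $\gamma^2\,\sup_{j\leqslant m}\hw_j\cdot n^{-1}\sum_{j\leqslant m}[\Gamma]_{j,j}$, i.e.\ \emph{two} powers of $\gamma$. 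The hypothesis only provides $\gamma\,(m/n)\sup_{j\leqslant m}\hw_j=o(1)$ --- one power --- and since $\gamma\geqslant 2\normV{[\Gamma]_{\um}^{-1}}$ typically diverges (exponentially fast in the infinitely smoothing case), $\gamma^2 n^{-1}\sup_{j\leqslant m}\hw_j=o(\gamma/m)$ need not vanish. The paper avoids this by expanding $[\widehat\Gamma]_{\um}^{-1}=[\Gamma]_{\um}^{-1}+[\widehat\Gamma]_{\um}^{-1}([\Gamma]_{\um}-[\widehat\Gamma]_{\um})[\Gamma]_{\um}^{-1}$: the leading term involves only the deterministic $[\Gamma]_{\um}^{-1/2}$ applied to the whitened residual, whose squared norm has expectation $O(m/n)$ by Lemma \ref{pr:upper:l2}, yielding $\gamma\sup_{j\leqslant m}\hw_j\,(m/n)$; the $\gamma^2$ factor is confined to the remainder, which is further split on $\Omega_{1/2}$ (where $\normV{(I+[\Xi_{n}]_{\um})^{-1}}\normV{[\Xi_{n}]_{\um}}\leqslant 1$) and on $\Omega_{1/2}^c$ via Cauchy--Schwarz, and it is precisely there that the condition $\gamma^2 m^3/n^{3/2}=O(1)$ is consumed.

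Second, your bound $\P(A^c)\lesssim\gamma^2\,\Ex\normV{[\widehat\Gamma]_{\um}-[\Gamma]_{\um}}^2\lesssim\gamma^2/n$ again carries two uncontrolled powers of $\gamma$: under $\gamma^2 m^3/n^{3/2}=O(1)$ this is only $O(n^{1/2}/m^3)$, which does not tend to zero. The paper instead works with the normalized fluctuation $[\Xi_{n}]_{\um}=[\Gamma]_{\um}^{-1/2}[\widehat\Gamma]_{\um}[\Gamma]_{\um}^{-1/2}-I_m$, whose entries have variance $O(1/n)$ uniformly in the conditioning of $[\Gamma]_{\um}$ because the moment condition defining $\cX^4_\eta$ is imposed on the whitened coordinates $([\Gamma]_{\um}^{-1/2}[X]_{\um})_j$; this gives $\P(\Omega_{1/2}^c)\leqslant C\eta\, m^2/n=o(1)$ with no $\gamma$ at all, and $\Omega^c\subset\Omega_{1/2}^c$ once $\gamma\geqslant 2\normV{[\Gamma]_{\um}^{-1}}$. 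In short, the missing idea is that every stochastic object must be conjugated by $[\Gamma]_{\um}^{-1/2}$ before taking moments --- that is the whole point of the second condition in the definition of $\cX^k_\eta$ --- and that the random inverse must be linearized around the deterministic one so that only a single power of $\gamma$ touches the leading term.
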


\begin{rem}The last result covers the case $\hw\equiv1$, i.e., the estimator of $\beta$ is consistent without an additional assumption on $\beta$. However, consistency is only obtained under the condition $\sup_{m\in\N}\normV{\Gamma^{-1}_{m} \Pi_m \Gamma \Pi_m^\perp}_\hw<\infty$, which is known to be sufficient to ensure convergence in the $\cW_\hw$-norm as $m\to \infty$ of the Galerkin solution  $\beta^m=[\beta^m]_{\um}^t[\psi]_{\um}(\cdot)$ with $[\beta^m]_{\um}=[\Gamma]_{\um}^{-1}[g]_{\um}$ to the slope parameter  $\beta$. Furthermore, if $\hw$ is increasing, as in case of a Sobolev norm, then $\widehat{\beta}$  is obviously a consistent estimator only if $\beta \in\cW_\hw$. Moreover, in the last assertion we may replace the condition  $\beta\in \cW_\hw$ by the assumption $\beta\in \cW_\bw$ and $(\hw_j/\bw_j)$ is non-increasing. In this situation we have $\cW_\bw\subset \cW_\hw$ and thus the result still holds true. Roughly speaking this corresponds  to the condition that at least  $p\geqslant s$ derivatives exist in case we want to estimate the $s$-th derivative. \hfill$\square$\end{rem}

\paragraph{Link condition.}In the last assertion the choice of the smoothing parameter $m$ and $\gamma,$  i.e. $\gamma \geqslant2\normV{[{\Gamma}]^{-1}_{\um}},$ depends on  the relation between the covariance operator $\Gamma$ associated to the regressor $X$ and the basis $\{\psi_j\}$ used for the projection, which we  formalize next. Consider the sequence $(\normV{\Gamma \psi_j})_{j\geqslant1}$, which is summable and hence converges to zero since $\Gamma$ is nuclear.  In what follows we impose restriction on  the decay of this sequence. Therefore, consider a strictly positive, monotonically decreasing and summable sequence  of weights $\Gw:=(\Gw_j)_{j\in\N}$ with $\Gw_1=1$.  Then for $s\in\R$ denote by $\norm_{\Gw^s}$ the associated  weighted norm given by  $\normV{f}_{\Gw^s}^2:=\sum_{j=1}^\infty \Gw_j^s |\skalarV{f,\psi_j}|^2$. Let $\cN$ be  the set of all self-adjoint nuclear operator defined on $L^2[0,1]$. Then  for $\Gd\geqslant 1$ define the subset $\cN_{\Gw}^\Gd$ of $\cN$  by 
\begin{equation}\label{bm:link}
\cN_{\Gw}^{\Gd}:=\Bigl\{ \Gamma\in\cN:\quad  \normV{f}_{\Gw^2}^2/d^2\leqslant \normV{\Gamma f}^2\leqslant d^2 \normV{f}_{\Gw^2}^2,\quad \forall f \in L^2[0,1]\Bigr\}.
\end{equation}
A similar condition, but in a different context, can be found, for example, in \cite{NairPereverzevTautenhahn05} and \cite{ChenReiss2008}. Note,  for all  $\Gamma\in\cN_{\Gw}^\Gd$ by using the inequality of \cite{Heinz51} it follows that\footnote{We write $a\asymp_\Gd b$ if  $\Gd^{-1}\leqslant b/a\leqslant \Gd$.}  $\normV{\Gamma\psi_j}\asymp_{{\Gd}} \Gw_j$. Hence, the sequence $(\Gw_j)_{j\in\N}$ has to be summable, i.e., $\sum_j\Gw_j<\infty,$ since $\Gamma$ is nuclear.  We  first consider this general class of operator. However, we illustrate condition \eqref{bm:link} in Section 
\ref{sec:ex}  by considering the particular cases of a sequence $\Gw$ with  polynomial or exponential decay which are naturally linked to polynomial or exponential decreasing rates for the eigenvalues of $\Gamma$. To be more precise, if the eigenvalue decomposition of $\Gamma\in\cN$ is given by $\{\lambda_j,\psi_j, j\in\N\}$ then $\Gamma\in\cN_\Gw^\Gd$ if and only if $ \lambda_j\asymp_{{\Gd}}\Gw_j$ for all $j\in\N$.   All the results below are derived under the following basic regularity assumption.  

\begin{assumption}\label{ass:reg} Let $\hw:=(\hw_j)_{j\geqslant 1}$, $\bw:=(\bw_j)_{j\geqslant 1}$ and  $\Gw:=(\Gw_j)_{j\geqslant 1}$ be strictly positive sequences of weights  with $\hw_1= 1$,  $\bw_1= 1$ and  $\Gw_1= 1$ such that   $\bw$ and     $(\bw_j/\hw_j)_{j\geqslant 1}$ are non-decreasing and  $\Gw$ and  $(\Gw_j^2/\hw_j)_{j\geqslant 1}$ are non-increasing  with $\Lw:= \sum_j \Gw_j< \infty$. \end{assumption}
Note that under Assumption \ref{ass:reg}, i.e., $(\bw_j/\hw_j)_{j\geqslant 1}$ is non-decreasing,   the ellipsoid  $\cW_\bw^\br$ is a subset of $\cW_\hw^\br$. Roughly speaking, if $\cW_\bw^\br$ describes $p$-times differentiable functions, then the Assumption \ref{ass:reg} ensures that the $\cW_{\hw}$-risk involves maximal $s\leqslant p$ derivatives. On the other hand if the sequence $\hw$ is decreasing, i.e., the $\cW_\hw$-norm is roughly speaking smoothing, the Assumption \ref{ass:reg} excludes  cases in which $\hw$ decreases faster than the sequence $\Gw^2$. 
 However, in case $\hw\equiv\Gw^2$ we show below that the obtainable optimal-rate is parametric, and hence, whenever  $(\hw_j/\Gw_j^2)=o(1)$ it is parametric too.

The next assertion summarizes now minimal conditions to ensure consistency of the estimator $\widehat{\beta}$ given in
\eqref{beta:s0} when the covariance operator satisfies a link condition. 

 \begin{coro}\label{gen:coro1}\dr Assume an $n$-sample of $(Y,X)$ satisfying \eqref{intro:e1} with $\sigma>0$ and associated   covariance operator  $\Gamma \in\cN_\Gw^\Gd$, $\Gd\geqslant1$.  Let $\beta\in \cW_\bw$, $X\in\cX^4_{\eta}$ and $\epsilon\in\cE^4_\eta$, $\eta\geqslant 1$. Consider  the estimator $\widehat{\beta}$  with threshold  $\gamma= 8\Gd^3/\Gw_m $  and parameter $m:=m(n)$  chosen  such that 
 $1/m=o(1)$,   $  (m/n)\, \sup_{1\leqslant j\leqslant m}\{ \hw_j/\Gw_j\} =o(1)$, $(m^2/n)=o(1)$ and $ m^{3}/(\Gw^2_m\,n^{1+1/2})=O(1)$ as  $n\to\infty$. If in addition  Assumption \ref{ass:reg} is satisfied, then 
 $\Ex\normV{\widehat{\beta}-\beta}_\hw^2=o(1)$ as $n\to\infty$.
\end{coro}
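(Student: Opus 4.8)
The plan is to obtain Corollary \ref{gen:coro1} as a direct specialisation of Proposition \ref{gen:prop1}: under the link condition $\Gamma\in\cN_\Gw^\Gd$ and Assumption \ref{ass:reg}, with the explicit choice $\gamma=8\Gd^3/\Gw_m$, I would check the hypotheses of the Proposition one by one and then simply invoke it. Several of them are immediate. Since Assumption \ref{ass:reg} makes $(\bw_j/\hw_j)_{j}$ non-decreasing we have $\cW_\bw\subset\cW_\hw$ (as noted right after the assumption), so $\beta\in\cW_\bw$ yields $\beta\in\cW_\hw$; the moment hypotheses $X\in\cX^4_\eta$ and $\epsilon\in\cE^4_\eta$ are carried over verbatim; and the conditions $1/m=o(1)$ and $m^2/n=o(1)$ appear unchanged in both statements.

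The first substantive step is to turn the link condition into admissibility of the threshold. Writing \eqref{bm:link} as the two-sided quadratic-form bound $\Gd^{-2}\normV{f}_{\Gw^2}^2\le\skalarV{\Gamma^2 f,f}\le \Gd^2\normV{f}_{\Gw^2}^2$ (using $\normV{\Gamma f}^2=\skalarV{\Gamma^2 f,f}$ by self-adjointness) and applying the operator monotonicity of the square root, i.e. the inequality of \cite{Heinz51}, gives $\Gd^{-1}\normV{f}_{\Gw}^2\le\skalarV{\Gamma f,f}\le \Gd\,\normV{f}_{\Gw}^2$ for all $f\in L^2[0,1]$. Restricting to $f\in\Psi_m$ and using that $\Gw$ is non-increasing yields $\skalarV{\Gamma f,f}\ge \Gd^{-1}\Gw_m\normV{f}^2$, so the smallest eigenvalue of $[\Gamma]_{\um}$ is at least $\Gw_m/\Gd$ and hence $\normV{[\Gamma]^{-1}_{\um}}\le\Gd/\Gw_m$. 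As $\Gd\ge1$ this gives $\gamma=8\Gd^3/\Gw_m\ge2\normV{[\Gamma]^{-1}_{\um}}$, exactly the admissibility required by the Proposition.

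The remaining rate conditions are then discharged by substituting $\gamma=8\Gd^3/\Gw_m$. The condition $\gamma^2(m^3/n^{1+1/2})=O(1)$ becomes $64\Gd^6\,m^3/(\Gw_m^2\,n^{1+1/2})=O(1)$, which is precisely the Corollary's hypothesis $m^3/(\Gw_m^2\,n^{1+1/2})=O(1)$ up to the harmless constant $64\Gd^6$. For the condition $\gamma\,(m/n)\sup_{1\le j\le m}\{\hw_j\}=o(1)$, substitution turns the left-hand side into $8\Gd^3\,(m/n)\,\sup_{1\le j\le m}\{\hw_j\}/\Gw_m$, and I would bound this by comparing it with the Corollary's quantity $(m/n)\sup_{1\le j\le m}\{\hw_j/\Gw_j\}$; this comparison is where the monotonicity of $\Gw$ and of $(\Gw_j^2/\hw_j)_{j}$ in Assumption \ref{ass:reg} is used (in the transparent case of a non-decreasing $\hw$ both quantities equal $(m/n)\,\hw_m/\Gw_m$, and Assumption \ref{ass:reg} is what controls the interplay of the weight sequences in general).

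I expect the main obstacle to be the Galerkin-bias hypothesis $\sup_{m}\normV{\Gamma_m^{-1}\Pi_m\Gamma\Pi_m^\perp}_\hw<\infty$, the only genuinely non-trivial operator estimate. Any operator that is diagonal in the basis $\{\psi_j\}$ has a vanishing cross-block $\Pi_m(\,\cdot\,)\Pi_m^\perp$, so $\Pi_m\Gamma\Pi_m^\perp$ measures exactly the deviation of $\Gamma$ from being diagonal in $\{\psi_j\}$; the two-sided comparison of $\Gamma$ with the diagonal weight operator obtained in the Heinz step is what lets one control, uniformly in $m$, the resulting $\cW_\hw$-operator norm after pre-multiplying by $\Gamma_m^{-1}$ (whose norm grows like $1/\Gw_m$). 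Once this uniform bound is established, all hypotheses of Proposition \ref{gen:prop1} are met and its conclusion $\Ex\normV{\widehat{\beta}-\beta}_\hw^2=o(1)$ applies. It is this weighted-norm estimate of the off-diagonal block, rather than the essentially mechanical rate bookkeeping, that I anticipate to be the crux.
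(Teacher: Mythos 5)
Your plan to obtain the corollary as a black-box specialisation of Proposition \ref{gen:prop1} breaks down at the rate condition $\gamma\,(m/n)\,\sup_{1\leqslant j\leqslant m}\{\hw_j\}=o(1)$, and this is a genuine gap rather than bookkeeping. Substituting $\gamma=8\Gd^3/\Gw_m$ turns that hypothesis into $(m/n)\,\sup_{1\leqslant j\leqslant m}\{\hw_j\}/\Gw_m=o(1)$, and since $\Gw_j\geqslant\Gw_m$ for $j\leqslant m$ one always has $\sup_{1\leqslant j\leqslant m}\{\hw_j/\Gw_j\}\leqslant\sup_{1\leqslant j\leqslant m}\{\hw_j\}/\Gw_m$ --- the comparison you propose goes in the wrong direction. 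The corollary's hypothesis is the strictly weaker one, and Assumption \ref{ass:reg} does not close the gap: it allows $\hw$ to decrease as fast as $\Gw^2$, and already for $\hw\equiv\Gw$ (the mean-prediction-error case of Section \ref{sec:ex}) one has $\sup_{1\leqslant j\leqslant m}\hw_j=1$, so the Proposition would demand $m/(n\Gw_m)=o(1)$ while the Corollary only assumes $m/n=o(1)$; in the exponential case $\Gw_m=\exp(-m^{2a})$ with $m\sim(\log n)^{1/(2a)}$ the former fails while the latter holds.

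The paper therefore does not invoke Proposition \ref{gen:prop1} as stated but re-enters its proof at the level of the bounds \eqref{pr:upper:gen:prop1:e1} and \eqref{pr:upper:gen:prop1:e2}, replacing the crude estimate $\normV{[\Diag(\hw)]^{1/2}_{\um}[\Gamma]_{\um}^{-1/2}}^2\leqslant\gamma\sup_{1\leqslant j\leqslant m}\{\hw_j\}$ by the link-condition estimate $\normV{[\Diag(\hw)]^{1/2}_{\um}[\Gamma]_{\um}^{-1/2}}^2\leqslant4\Gd^3\sup_{1\leqslant j\leqslant m}\{\hw_j/\Gw_j\}$ of Lemma \ref{pr:upper:l3}, which plays each weight $\hw_j$ off against its own $\Gw_j$ rather than uniformly against $\Gw_m$. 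That is the step your argument is missing. Your Heinz-inequality derivation of $\normV{[\Gamma]_{\um}^{-1}}\leqslant\Gd/\Gw_m$ is correct (and slightly sharper than the paper's $4\Gd^3/\Gw_m$), and your reading of the Galerkin-bias condition is directionally right --- the paper proves $\sup_{m\in\N}\normV{\Gamma^{-1}_{m}\Pi_m\Gamma\Pi_m^{\perp}}^2_{\hw}\leqslant2(1+\Gd^2)$ from the identity $[\Pi_m\beta-\beta^m]_{\um}=-[\Gamma]_{\um}^{-1}[\Gamma\Pi_m^{\perp}\beta]_{\um}$ and \eqref{pr:upper:l3:e2:1} --- but that part is only announced in your proposal, not established.
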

It is worth noting that   the link condition  $\Gamma \in\cN_\Gw^\Gd$ used in the last assertion implies   $\sup_{m\in\N}\normV{\Gamma^{-1}_{m} \Pi_m \Gamma \Pi_m^\perp}_\hw<\infty$ and hence ensures automatically the consistency in the $\cW_\hw$-norm of the Galerkin solution $\beta^m$ as $m\to\infty$. However, in order to obtain a rate of convergence it is necessary to impose additional regularity assumption on the slope parameter $\beta$.  First we derive a lower bound for any estimator when these regularity assumptions are formalized by the condition that $\beta$ belongs to the ellipsoid $\cW_\bw^\br$.

\subsection{The lower bound.}
It is well-known  that in general the hardest one-dimensional subproblem does not capture the full difficulty in estimating the solution of an inverse problem even in case of a known operator (for details see e.g. the proof in \cite{MairRuymgaart96}).  In other words, there does not exist two sequences of slope functions $\beta_{1,n},\beta_{2,n}\in \cW_\bw^\br$, which are statistically not consistently distinguishable and which satisfy $\normV{\beta_{1,n}-\beta_{2,n}}^2_\hw\geqslant C \dstar$, where $\dstar$ is the optimal rate of convergence. Therefore we need to consider subsets of $\cW_\bw^\br$ with  growing number of elements in order to get the optimal lower bound. More precisely, we obtain the following lower bound by applying Assouad's cube technique (see e.g. \cite{KorostelevTsybakov1993} or \cite{ChenReiss2008}). Moreover,  the following lower bound is obtained under the additional assumption that distribution of the error term $\epsilon$  is Gaussian with mean zero and variance one, i.e., $\epsilon\sim\cN(0,1)$.
\begin{theo}\label{gen:lower:theo}\dr Assume an $n$-sample of $(Y,X)$ satisfying \eqref{intro:e1} with $\sigma>0$ and  associated   covariance operator  $\Gamma \in\cN_\Gw^\Gd$, $\Gd\geqslant1$. Suppose the error term  $\epsilon\sim\cN(0,1)$ is independent of $X$.  Consider $\cW_\bw^\br,$ $\br>0,$  as set of slope functions.  Let $\kstar:=\kstar(n)$ and $\dstar:=\dstar(\kstar)$ for some $\triangle\geqslant 1$ be chosen such that 
\begin{equation}\label{gen:def:m-gam}
\triangle^{-1}\leqslant \frac{b_{\kstar}}{n\,\hw_{\kstar}}\sum_{j=1}^{\kstar}\frac{\hw_j}{\Gw_j}\leqslant \triangle \quad \mbox{ and }\quad 
\dstar:=\hw_{\kstar}/b_{\kstar}.
\end{equation} 
If in addition the Assumption \ref{ass:reg} is satisfied, then for any estimator $\widetilde{\beta}$ of $\beta$ we have
\begin{equation*} \sup_{\beta \in \cW_\br^\br} \left\{ \Ex\normV{\widetilde{\beta}-\beta}^2_\hw\right\}\geqslant \frac{1}{4\triangle}\cdot \min\Bigl\{\frac{\sigma^2}{2\Gd}, \frac{\rho}{\triangle}\Bigr\} \cdot \dstar.
\end{equation*}
\end{theo}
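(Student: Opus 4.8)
The plan is to establish the lower bound via Assouad's cube technique, reducing the full problem to a family of $2^{\kstar}$ hypotheses indexed by the vertices of a hypercube. First I would introduce the perturbed slope functions
\[
\beta_\theta := \sum_{j=1}^{\kstar}\theta_j\,a_j\,\psi_j,\qquad \theta=(\theta_1,\dots,\theta_{\kstar})\in\{0,1\}^{\kstar},
\]
with amplitudes $a_j^2 := c/(n\,\Gw_j)$ and $c := \min\{\sigma^2/(2\Gd),\,\rho/\triangle\}$. Because every $\beta_\theta$ lies in $\Psi_{\kstar}$ and the $\psi_j$ are orthonormal, both the loss $\normV{\widetilde\beta-\beta_\theta}_\hw^2$ and the pairwise distances $\normV{\beta_\theta-\beta_{\theta'}}_\hw^2=\sum_{j=1}^{\kstar}\hw_j a_j^2\,\1\{\theta_j\neq\theta_j'\}$ split over coordinates, which is exactly what makes the cube argument effective.

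The second step is to check that the whole cube is admissible, i.e. $\beta_\theta\in\cW_\bw^\br$ for every $\theta$. The worst case is $\theta\equiv1$, for which, using that $(\bw_j/\hw_j)_j$ is non-decreasing (Assumption \ref{ass:reg}) so that $\bw_j\leq(\bw_{\kstar}/\hw_{\kstar})\,\hw_j$ on $j\leq\kstar$,
\[
\sum_{j=1}^{\kstar}\bw_j a_j^2=\frac{c}{n}\sum_{j=1}^{\kstar}\frac{\bw_j}{\Gw_j}\leq c\cdot\frac{\bw_{\kstar}}{n\,\hw_{\kstar}}\sum_{j=1}^{\kstar}\frac{\hw_j}{\Gw_j}\leq c\,\triangle\leq\rho,
\]
where the penultimate inequality is the upper half of the defining relation \eqref{gen:def:m-gam} and the last one uses $c\leq\rho/\triangle$. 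Hence $\{\beta_\theta\}\subset\cW_\bw^\br$ and the supremum in the theorem dominates the maximal risk over the cube.

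The third step controls the statistical closeness of neighbouring hypotheses. Since the law of $X$ does not depend on $\theta$ and, conditionally on $X$, $Y\sim\cN(\skalarV{\beta_\theta,X},\sigma^2)$, the Kullback--Leibler divergence of the two $n$-samples attached to neighbours $\theta,\theta'$ differing only in coordinate $j$ equals $\frac{n}{2\sigma^2}\skalarV{\Gamma(\beta_\theta-\beta_{\theta'}),\beta_\theta-\beta_{\theta'}}=\frac{n}{2\sigma^2}a_j^2\,[\Gamma]_{j,j}$. The link condition $\Gamma\in\cN_\Gw^\Gd$ yields $[\Gamma]_{j,j}=\skalarV{\Gamma\psi_j,\psi_j}\leq\normV{\Gamma\psi_j}\leq\Gd\,\Gw_j$, so this divergence is at most $\frac{c\,\Gd}{2\sigma^2}\leq\frac14$, uniformly in $j$ and in the pair, thanks to $c\leq\sigma^2/(2\Gd)$. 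Consequently the testing affinity $1-\normV{\P_\theta-\P_{\theta'}}_{TV}$ between neighbours is bounded below by a fixed positive constant.

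Finally I would assemble the pieces through Assouad's lemma. From the coordinatewise splitting, any estimator $\widetilde\beta$ induces $\htheta_j:=\1\{\skalarV{\widetilde\beta,\psi_j}>a_j/2\}$ with $\normV{\widetilde\beta-\beta_\theta}_\hw^2\geq\frac14\sum_{j=1}^{\kstar}\hw_j a_j^2\,\1\{\htheta_j\neq\theta_j\}$; averaging over the cube and invoking the neighbour affinity bound of the previous step lower bounds each coordinate error probability by a constant. Combined with $\sum_{j=1}^{\kstar}\hw_j a_j^2=\frac{c}{n}\sum_{j=1}^{\kstar}\frac{\hw_j}{\Gw_j}\geq\frac{c}{\triangle}\,\frac{\hw_{\kstar}}{\bw_{\kstar}}=\frac{c}{\triangle}\,\dstar$, obtained from the lower half of \eqref{gen:def:m-gam}, this delivers the announced bound with constant $\frac{1}{4\triangle}\min\{\sigma^2/(2\Gd),\,\rho/\triangle\}$. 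I expect the main difficulty to be the bookkeeping forced by the unequal per-coordinate weights $\hw_j a_j^2$: the textbook equal-weight cube does not apply verbatim, and one must run Assouad's reduction coordinate by coordinate while keeping the divergence bound uniform in $j$. It is precisely the link condition, through $[\Gamma]_{j,j}\asymp_\Gd\Gw_j$ together with the choice $a_j^2\propto 1/(n\Gw_j)$, that renders the neighbour divergence coordinate-free and lets the two competing constraints, namely ellipsoid membership and testing separation, surface as the two terms of the minimum.
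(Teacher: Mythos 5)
Your route is the paper's route: an Assouad cube in the first $\kstar$ coordinates with amplitudes $u_j^2\propto 1/(n\Gw_j)$ calibrated by $\min\{\sigma^2/(2\Gd),\br/\triangle\}$, ellipsoid membership checked via the monotonicity of $(\bw_j/\hw_j)$ and the upper half of \eqref{gen:def:m-gam}, a per-coordinate divergence bound from the Gaussian conditional likelihood and $[\Gamma]_{j,j}\leqslant\Gd\Gw_j$, and the lower half of \eqref{gen:def:m-gam} to convert $\sum_j\hw_ju_j^2$ into $\dstar/\triangle$. All of these steps are sound and match Lemma \ref{app:lower:lem1} and the proof of Theorem \ref{gen:lower:theo}.

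The one concrete deficiency is the final constant: your construction does not deliver the stated $\tfrac{1}{4\triangle}\min\{\sigma^2/(2\Gd),\br/\triangle\}\,\dstar$, only the same rate with a smaller constant. With the $\{0,1\}^{\kstar}$ cube the neighbour separation in coordinate $j$ is $a_j$, so the nearest-point reduction costs a factor $\tfrac14$ of $\sum_j\hw_ja_j^2$, and passing through Pinsker and the testing affinity costs a further factor $\tfrac12\bigl(1-\normV{\P_\theta-\P_{\theta'}}_{TV}\bigr)<\tfrac12$; the product is strictly below $\tfrac14$. The paper instead takes $\theta\in\{-1,1\}^{\kstar}$, so neighbours are separated by $2u_j$ while the ellipsoid constraint still only sees $u_j^2$ (a free factor of $4$ in squared separation), and replaces the TV/test step by the Hellinger-affinity inequality
\begin{equation*}
\Ex_{\theta^{(j)}}\bigl|\skalarV{\widetilde{\beta}-\beta^{\theta^{(j)}},\psi_j}\bigr|^2+\Ex_{\theta}\bigl|\skalarV{\widetilde{\beta}-\beta^{\theta},\psi_j}\bigr|^2\geqslant\tfrac{1}{2}\,\bigl|\skalarV{\beta^{\theta}-\beta^{\theta^{(j)}},\psi_j}\bigr|^2\,\rho^2(P_{\theta^{(j)}},P_{\theta})
\end{equation*}
with $\rho(P_{\theta^{(j)}},P_{\theta})=1-\tfrac12H^2\geqslant\tfrac12$ guaranteed by $H^2\leqslant KL\leqslant 1$; this yields exactly $\tfrac12 u_j^2$ per coordinate and hence the factor $\tfrac14$ after averaging over the cube. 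So either switch to the symmetric cube and the Hellinger affinity, or state your conclusion with the (smaller) constant your bookkeeping actually produces.
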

\begin{rem}The normality and independence assumption on the error term  in the last theorem is only used to simplify the calculation of the distance  between distributions corresponding to different slope functions. However, below we show an upper bound for the estimator $\widehat{\beta}$ in case the error term  $\epsilon\in\cE^k_\eta$  and  the regressor $X\in\cX^k_\eta$ for some $k\in \N$ and  $\eta\geqslant 1$ are only uncorrelated, which includes the particular case of an independent Gaussian error considered in Theorem \ref{gen:lower:theo} as long as $\eta$ is sufficiently large.  Therefore, by applying Theorem \ref{gen:lower:theo} an upper bound of order $\dstar$ implies that  this rate is optimal and hence the estimator $\widehat{\beta}$ is minimax-optimal. 
Note further that if $(\hw_j/\Gw_j)$ is summable then the order $\dstar$ is parametric. This in particular  is the case when $\hw\equiv\Gw^2$ since   $(\Gw_j)$ is summable.
 \hfill$\square$\end{rem}

\begin{rem}\label{gen:rem:regressor}In case the eigenfunctions of the operator $\Gamma$ are known, the obtainable accuracy of any estimator of $\beta$ is essentially determined by the decay of the  eigenvalues $(\lambda_j)_{j\geqslant 1}$ of $\Gamma$. To be more precise, if  for some  sequence of weights $\Gw:=(\Gw_j)_{j\geqslant 1}$ we have 
 \begin{equation}\label{gen:rem:e1}\exists\, d\geqslant1:\qquad  \lambda_j\asymp_d \Gw_j,\qquad j\geqslant1,\end{equation}
then $\Gw$ determines the obtainable rate of convergence (c.f. \cite{JJ2008}).   If $\{\psi_j\}$ are the eigenfunctions of $\Gamma$, i.e., $\lambda_j=\skalarV{\Gamma\psi_j,\psi_j}$, then the condition \eqref{gen:rem:e1} holds if and only if $\Gamma\in\cN_{\Gw}^d$.
In other words,  the condition $\Gamma\in\cN_{\Gw}^d$ specifies in this situation the  decay of the  eigenvalues of $\Gamma$. However, the set  $\cN_{\Gw}$ also contains operators whose eigenfunctions are not given by $\{\psi_j\}$. Then the corresponding eigenvalues may decay far slower than the sequence of weights $\Gw$. Hence, for these operators the obtainable rate of convergence  may be far slower by using the basis $\{\psi_j\}$  in place of their eigenfunctions.\hfill$\square$\end{rem}

\subsection{The upper bound.}
In the following theorem we provide an upper bound for the estimator $\widehat{\beta}$ defined in \eqref{beta:s0}  by assuming  sequences $\bw$, $\hw$ and $\Gw$  with the additional property that 
\begin{equation}\label{gen:upper:varphi:cond}
\frac{\kstar^{2k}}{\dstar n^k} =O(1),\;\; \frac{\kstar}{\dstar n}\,  \sup_{1\leqslant j\leqslant {\kstar}}\Bigl\{\frac{ \hw_j}{\Gw_j}\Bigr\} = O(1)\;\mbox{and}\; \frac{\kstar^{2+k}}{n^{k/2-1}}  = O(1) \mbox{ for  some $k\in\N$  as }n\to\infty, 
\end{equation}
where  $\kstar:=\kstar(n)$ and $\dstar:=\dstar(\kstar)$ are given by \eqref{gen:def:m-gam}. The next theorem states that the rate $\dstar$ of the lower bound given  in Theorem \ref{gen:lower:theo} provides also an upper bound of 
the estimator $\widehat{\beta}$ defined in \eqref{beta:s0}.

\begin{theo}\label{gen:upper:theo}\dr Assume an $n$-sample of $(Y,X)$ satisfying \eqref{intro:e1} with $\sigma>0$ and  associated   covariance operator  $\Gamma \in\cN_\Gw^\Gd$, $\Gd\geqslant1$. Consider $\cW_\bw^\br$, $\br>0$ as set of slope functions and suppose that the sequences $\bw$, $\hw$ and $\Gw$ satisfy the Assumption \ref{ass:reg}. Let $\kstar:=\kstar(n)$ and  $\dstar:=\dstar(n)$ be given by \eqref{gen:def:m-gam} and suppose \eqref{gen:upper:varphi:cond} is satisfied for some $k\geqslant 4$.    Consider the estimator $\widehat{\beta}$  with parameter $m=\kstar$ and  threshold $\gamma= n \, \max(1, 8\, \Gd^3\,\triangle/  \bw_{\kstar})$. If in addition  $X\in \cX_{\eta}^{4k}$ and $\epsilon\in \cE_\eta^{4k}$, $\eta\geqslant 1$, then we have
$$\sup_{\beta\in\cW_\bw^\br} \Ex\normV{\widehat{\beta}-\beta}^2_\hw\leqslant C\,\dstar\, \eta \, \Gd^{16}\,\triangle^2 \{\sigma^{2}+  \rho \Lw \}.$$
where  $C$  is  a positive constant.
\end{theo}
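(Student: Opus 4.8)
The plan is to run a bias--variance decomposition tailored to the Galerkin--thresholding structure, isolating the stochastic error on a ``good event'' on which the empirical matrix $[\widehat\Gamma]_{\um}$ is well conditioned. Write $m=\kstar$ throughout, and let $\beta^m$ be the Galerkin solution with $[\beta^m]_{\um}=[\Gamma]^{-1}_{\um}[g]_{\um}$, so that $[\Gamma(\beta-\beta^m)]_{\um}=0$. First I would split
\begin{equation*}
\Ex\normV{\widehat\beta-\beta}_\hw^2\leqslant 2\,\Ex\normV{\widehat\beta-\beta^m}_\hw^2+2\,\normV{\beta^m-\beta}_\hw^2 ,
\end{equation*}
and control the deterministic Galerkin error by writing $\beta-\beta^m=\Pi_m^\perp\beta+(\Pi_m\beta-\beta^m)$ with $\Pi_m\beta-\beta^m=-\Gamma_m^{-1}\Pi_m\Gamma\Pi_m^\perp\beta$. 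Under Assumption~\ref{ass:reg} the tail satisfies $\normV{\Pi_m^\perp\beta}_\hw^2\leqslant(\hw_{\kstar}/\bw_{\kstar})\,\br=\dstar\br$, since $(\hw_j/\bw_j)$ is non-increasing and $\beta\in\cW_\bw^\br$, while the link condition $\Gamma\in\cN_\Gw^\Gd$ bounds $\normV{\Gamma_m^{-1}\Pi_m\Gamma\Pi_m^\perp}_\hw$ uniformly and produces a remaining term of order $\dstar\,\br\,\Lw$, the factor $\Lw=\sum_j\Gw_j$ coming from summing the off-diagonal weights against $\Gw$. Hence the bias is $O(\dstar\,\br\,\Lw)$ up to powers of $\Gd$.

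For the stochastic part, on the threshold-keeping event $\mc{K}:=\{\normV{[\widehat\Gamma]^{-1}_{\um}}\leqslant\gamma\}$ one has $[\widehat\beta]_{\um}-[\beta^m]_{\um}=[\widehat\Gamma]^{-1}_{\um}\,\xi$, where
\begin{equation*}
\xi:=\frac1n\sum_{i=1}^n[X_i]_{\um}\bigl(\sigma\epsilon_i+\skalarV{X_i,\beta-\beta^m}\bigr).
\end{equation*}
The decisive observation is that $\Ex\,\xi=0$: the noise part vanishes because $\epsilon$ is uncorrelated with $X$, and the projection part because $\Ex\bigl[[X]_{\um}\skalarV{X,\beta-\beta^m}\bigr]=[\Gamma(\beta-\beta^m)]_{\um}=0$ by Galerkin orthogonality. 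I would then write $[\widehat\Gamma]^{-1}_{\um}=[\Gamma]^{-1}_{\um}+\bigl([\widehat\Gamma]^{-1}_{\um}-[\Gamma]^{-1}_{\um}\bigr)$ and work on the concentration event $\Omega_n:=\{\normV{[\widehat\Gamma]_{\um}-[\Gamma]_{\um}}\leqslant\tfrac12\normV{[\Gamma]^{-1}_{\um}}^{-1}\}$, on which $[\widehat\Gamma]^{-1}_{\um}$ is comparable to $[\Gamma]^{-1}_{\um}$. On $\Omega_n$ the leading term $[\Gamma]^{-1}_{\um}\xi$ has $\cW_\hw$-risk $\tr\bigl(\Diag([\hw]_{\um})\,[\Gamma]^{-1}_{\um}\Cov(\xi)[\Gamma]^{-1}_{\um}\bigr)$, whose dominant piece $\tfrac{\sigma^2}{n}\tr\bigl(\Diag([\hw]_{\um})[\Gamma]^{-1}_{\um}\bigr)\leqslant\tfrac{\Gd^2\sigma^2}{n}\sum_{j\leqslant\kstar}\hw_j/\Gw_j$ is at most $\triangle\,\Gd^2\,\sigma^2\,\dstar$ by the calibration \eqref{gen:def:m-gam}. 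A fourth-moment estimate (using $X\in\cX^4_\eta$) bounds the part of $\Cov(\xi)$ generated by $\skalarV{X,\beta-\beta^m}$ and, with $\normV{\beta-\beta^m}^2\lesssim\br$, gives a contribution of order $\dstar\,\br\,\Lw\,\eta$. The perturbation term $\bigl([\widehat\Gamma]^{-1}_{\um}-[\Gamma]^{-1}_{\um}\bigr)\xi=-[\widehat\Gamma]^{-1}_{\um}\bigl([\widehat\Gamma]_{\um}-[\Gamma]_{\um}\bigr)[\Gamma]^{-1}_{\um}\xi$ carries an extra factor $\normV{[\widehat\Gamma]_{\um}-[\Gamma]_{\um}}$ and is of lower order on $\Omega_n$.

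It then remains to absorb the two exceptional events. On the thresholding event $\mc{K}^c$ we have $\widehat\beta=0$, so the contribution is $\normV{\beta^m}_\hw^2\,\P(\mc{K}^c)$; since $\gamma=n\max(1,8\Gd^3\triangle/\bw_{\kstar})\geqslant 2\normV{[\Gamma]^{-1}_{\um}}$ for $n$ large (using $\normV{[\Gamma]^{-1}_{\um}}\leqslant\Gd^2/\Gw_{\kstar}$ from the link condition together with the rate conditions), the occurrence of $\mc{K}^c$ forces the large deviation $\normV{[\widehat\Gamma]_{\um}-[\Gamma]_{\um}}\geqslant\tfrac12\normV{[\Gamma]^{-1}_{\um}}^{-1}$. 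On $\Omega_n^c$ I would keep the crude bound $\normV{[\widehat\Gamma]^{-1}_{\um}}\leqslant\gamma$ from the threshold and apply Cauchy--Schwarz. Both remainders are then dominated by a Markov inequality at the $k$-th moment applied to $\normV{[\widehat\Gamma]_{\um}-[\Gamma]_{\um}}$.

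I expect the \emph{main obstacle} to be precisely this concentration estimate: one must bound $\Ex\normV{[\widehat\Gamma]_{\um}-[\Gamma]_{\um}}^{k}$ with the sharp dependence on the dimension $m=\kstar$, which is exactly what the strong moment hypotheses $X\in\cX^{4k}_\eta$ and $\epsilon\in\cE^{4k}_\eta$ and the three rate conditions \eqref{gen:upper:varphi:cond} are calibrated to deliver, so that $\P(\mc{K}^c)$ and $\P(\Omega_n^c)$ decay fast enough to make the exceptional terms $o(\dstar)$. A secondary difficulty, woven throughout, is that the threshold controls only the \emph{spectral} norm $\normV{[\widehat\Gamma]^{-1}_{\um}}$ whereas the risk is measured in the weighted $\cW_\hw$-norm; passing between the two is where the monotonicity of $(\Gw_j^2/\hw_j)$ in Assumption~\ref{ass:reg} is used, and care is needed because $[\widehat\Gamma]^{-1}_{\um}$ and $\xi$ are built from the same sample, so independence is replaced by the good-event splitting above.
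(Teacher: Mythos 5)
Your proposal follows essentially the same route as the paper's proof: the same split into Galerkin bias plus stochastic error driven by $\xi=[T_{n}]_{\um}+[W_{n}]_{\um}$ with $\Ex\,\xi=0$ by Galerkin orthogonality, the same perturbation of $[\widehat\Gamma]^{-1}_{\um}$ around $[\Gamma]^{-1}_{\um}$ on a concentration event, and the same control of the exceptional events by Markov's inequality applied to high moments of $\normV{[\widehat\Gamma]_{\um}-[\Gamma]_{\um}}$, which is exactly the content of the paper's Lemmas \ref{pr:upper:l2} and \ref{pr:upper:l3}. The only slip is cosmetic: the factor $\Lw$ does not arise in the bias (which is simply $\lesssim \Gd^4\,\br\,\dstar$ by Lemma \ref{pr:upper:l3}) but enters only through $\Ex\normV{X}^2\leqslant \Gd\,\Lw$ in the variance of the $T_{n}$-part, which your fourth-moment estimate captures anyway.
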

Thus, we have proved that  the rate $\dstar$ is optimal and hence the estimator $\widehat{\beta}$   is  minimax optimal. 

\begin{rem}It is worth noting that as long as the sequence $b$ is increasing the condition on the threshold $\gamma$ given in Theorem \ref{gen:upper:theo} writes $\gamma=n$ for all sufficiently large $n$. Therefore, only the parameter $m$ has to be chosen data-driven   in order to build   an adaptive estimation procedure.  On the other hand, under the assumptions of Theorem \ref{gen:upper:theo}  the parametric rate cannot be obtained. To be more precise, in case that $\sum_j \hw_j/\Gw_j<\infty$, the rate of the  lower bound in Theorem \ref{gen:upper:theo}  is given by $\dstar=1/n$. But in this case the condition ${\kstar}/({\dstar n})\,  \sup_{1\leqslant j\leqslant {\kstar}}\{{ \hw_j}/{\Gw_j}\} = O(1)$ is not satisfied and hence we cannot apply Theorem \ref{gen:upper:theo}. However, we conjecture that the proposed estimator attains also the parametric rate under a  stronger set of assumptions as, for example, used by  \cite{JohannesSchenk2008}  in order to obtain rate optimal estimation of a linear functional of the slope parameter $\beta$.\hfill$\square$\end{rem}

\section{Mean squared prediction error and derivative estimation}\label{sec:ex}
In this section  we will suppose that the slope function $\beta$ is an element of the Sobolev space of periodic functions $\cW_p$ for some $p> 0$ 
 given by  \begin{equation*}
 \cW_{p}=\Bigl\{f\in H_{s}: f^{(j)}(0)=f^{(j)}(1),\quad j=0,1,\dotsc,p-1\Bigr\},
 \end{equation*}
where  $H_{p}:= \{ f\in L^2[0,1]:  f^{(p-1)}\mbox{ absolutely continuous }, f^{(p)}\in L^2[0,1]\}$  is a Sobolev space (c.f. \cite{Neubauer1988,Neubauer88}, \cite{MairRuymgaart96} or \cite{Tsybakov04}). Let us first remark that if we consider the sequence of weights  $(b_j^p)_{j\in\N}$ given by
\begin{equation}\label{form:def:b}
b_{1}^p=1\quad\mbox{ and }\quad b_{2j}^p= b_{2j+1}^p=j^{2p},\qquad j\in\N,
\end{equation}
and the trigonometric basis
  \begin{equation}\label{model:def:trigon}
\psi_{1}(t)=1, \quad\psi_{2k}(t)=\sqrt{2}\cos(2\pi k t),\quad \psi_{2k+1}(t)=\sqrt{2}\sin(2\pi k t),\quad k=1,2,\dotsc .
\end{equation}
then the Sobolev space  of periodic functions  is equivalently given by $\cW_{\bw^p}$ defined in \eqref{sob-scale}. Therefore, let us denote by $\cW_p^\br:=\cW_{\bw^p}^\br$, $\br>0$, an ellipsoid in the Sobolev space $\cW_p$.

\paragraph{Mean squared prediction error.} We shall first measure the performance of the estimator by  considering the mean prediction error (MPE), i.e., $\Ex\normV{\widehat\beta-\beta}_\Gamma^2$. In this case, if $\Gamma$ satisfies a link condition, that is $\Gamma \in \cN_{\Gw}^{\Gd}$, $\Gd\geqslant 1$, for some weight sequence  $\Gw$ (see definition \ref{bm:link}), then it follows by using the inequality of \cite{Heinz51} that the MPE  is equivalent to the $\cW_\Gw$-risk, that is $\Ex \normV{\widehat\beta-\beta}_\Gw^2$.
To illustrate the previous results we assume in the following the sequence $(\Gw_j)_{m\in\N}$ to be either  polynomially decreasing, i.e., $\Gw_1=1$ and $\Gw_j  = |j|^{-2a}$, $j\geqslant 2$, for some $a>1/2$,  or   exponentially decreasing, i.e.,  $\Gw_1=1$ and $\Gw_j  = \exp(-|j|^{2a})$, $j\geqslant 2$, for some $a>0$. In the polynomial case easy calculus shows  that  a covariance operator $\Gamma\in\cN_\Gw^\Gd$  acts like integrating   $(2a)$-times and hence it is called {\it finitely smoothing} (c.f. \cite{Natterer84}). Furthermore, if the eigenfunctions of $\Gamma$ are $\{\psi_j\}$, then $\Gamma\in\cN_\Gw^\Gd$ holds if and only if the eigenvalues $\lambda_j$  of $\Gamma$  satisfy  $\lambda_j\asymp_d |j|^{-2a}$, which is the case considered, for example, in  \cite{CrambesKneipSarda2007}. 
On the other hand in the exponential case  it can  easily be seen that the link condition $\Gamma\in\cN_\Gw^\Gd$ implies $\cR(\Gamma)\subset \cW_{p}$ for all $p>0$, therefore  the operator $\Gamma$ is called {\it infinitely smoothing} (c.f. \cite{Mair94}).  Moreover, if the eigenfunctions of $\Gamma$ are $\{\psi_j\}$, then  $\Gamma\in\cN_\Gw^\Gd$  holds if and only if the eigenvalues $\lambda_j$  of $\Gamma$  satisfy  $\lambda_j\asymp_d \exp(-j^{2a})$. To the best of our knowledge this case has not been considered yet in the literature. Since in both cases the  basic regularity assumption \ref{ass:reg} is satisfied, the lower bounds presented in the next assertion follow directly from Theorem \ref{gen:lower:theo}. Here and subsequently, we write $a_n\lesssim b_n$  when there exists $C>0$ such that  $a_n\leqslant C\, b_n$  for all sufficiently large $ n\in\N$ and  $a_n\sim b_n$ when $a_n\lesssim b_n$ and $b_n\lesssim a_n$ simultaneously. 

\begin{prop}\label{MPE:lower}\dr \mbox{Under the assumptions of Theorem \ref{gen:lower:theo} we have for any estimator $ \widetilde{\beta}$}\\[-4ex]
\begin{itemize}\item[(i)] in the polynomial case, i.e. $\Gw_1=1$ and $\Gw_j  = |j|^{-2a}$, $j\geqslant 2$, for some $a>1/2$,    that\\[1ex] 
\hspace*{5ex}$\sup_{\beta \in \cW_p^\rho} \bigl\{ \Ex\normV{\widetilde{\beta}-\beta}_\Gamma^2\bigr\}\gtrsim n^{-(2p+2a)/(2p+2a+1)} $,
\item[(ii)] in the exponential case, i.e. $\Gw_1=1$ and $\Gw_j  = \exp(-|j|^{2a})$, $j\geqslant 2$, for some $a>0$,    that\\[1ex] 
\hspace*{5ex}$\sup_{\beta \in \cW_p^\rho} \bigl\{ \Ex\normV{\widetilde{\beta}-\beta}^2_\Gamma\bigr\}\gtrsim n^{-1}(\log n)^{1/2a}$.
\end{itemize}
\end{prop}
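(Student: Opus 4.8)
The plan is to reduce both lower bounds to a single application of Theorem \ref{gen:lower:theo} with the risk weight chosen as $\hw\equiv\Gw$, and then simply to read off the rate $\dstar$ by evaluating the indices in \eqref{gen:def:m-gam} for the two decay profiles of $\Gw$.

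First I would set up the reduction from the $\cW_\Gamma$-risk to the $\cW_\Gw$-risk. As observed in the paragraph preceding the statement, the link condition $\Gamma\in\cN_\Gw^\Gd$ together with the inequality of \cite{Heinz51} supplies constants $0<c_\Gd\leqslant C_\Gd$, depending only on $\Gd$, with $c_\Gd\normV{f}_\Gw^2\leqslant\normV{f}_\Gamma^2\leqslant C_\Gd\normV{f}_\Gw^2$ for all $f\in L^2[0,1]$: writing $\normV{f}_\Gamma^2=\langle\Gamma f,f\rangle=\normV{\Gamma^{1/2}f}^2$, the bound $\normV{\Gamma f}^2\asymp_\Gd\normV{f}_{\Gw^2}^2$ of \eqref{bm:link} yields, via Heinz's inequality, $\normV{\Gamma^{1/2}f}^2\asymp_\Gd\normV{f}_\Gw^2$. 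Hence for any estimator $\widetilde\beta$ one has $\sup_{\beta\in\cW_p^\rho}\Ex\normV{\widetilde\beta-\beta}_\Gamma^2\geqslant c_\Gd\,\sup_{\beta\in\cW_p^\rho}\Ex\normV{\widetilde\beta-\beta}_\Gw^2$, so it suffices to bound the $\cW_\Gw$-risk from below. Since $\cW_p^\rho=\cW_{\bw^p}^\rho$ with $\bw^p$ from \eqref{form:def:b}, I would check that Assumption \ref{ass:reg} holds for the triple $(\hw,\bw,\Gw)=(\Gw,\bw^p,\Gw)$: the sequence $\bw^p$ is non-decreasing and $\Gw$ non-increasing, so $(\bw_j^p/\Gw_j)$ is non-decreasing, while $\Gw_j^2/\Gw_j=\Gw_j$ is non-increasing, and $\sum_j\Gw_j<\infty$ because $a>1/2$ in the polynomial case and $a>0$ in the exponential case. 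Theorem \ref{gen:lower:theo} then applies with $\hw\equiv\Gw$ and delivers the lower bound of order $\dstar$.

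Next I would compute $\kstar$ and $\dstar$ from \eqref{gen:def:m-gam}. Because $\hw\equiv\Gw$ gives $\hw_j/\Gw_j\equiv1$, the sum equals $\sum_{j=1}^{\kstar}\hw_j/\Gw_j=\kstar$, so the defining constraint becomes $\bw_{\kstar}^p\,\kstar/(n\,\Gw_{\kstar})\asymp1$ and $\dstar=\Gw_{\kstar}/\bw_{\kstar}^p$. Using $\bw_{\kstar}^p\asymp\kstar^{2p}$, the polynomial case $\Gw_{\kstar}\asymp\kstar^{-2a}$ gives $\kstar^{2p+2a+1}\asymp n$, hence $\kstar\asymp n^{1/(2p+2a+1)}$ and $\dstar\asymp\kstar^{-(2p+2a)}\asymp n^{-(2p+2a)/(2p+2a+1)}$, which is (i). The exponential case $\Gw_{\kstar}=\exp(-\kstar^{2a})$ gives the transcendental relation $\kstar^{2p+1}\exp(\kstar^{2a})\asymp n$, and substituting $\exp(-\kstar^{2a})\asymp\kstar^{2p+1}/n$ into $\dstar=\exp(-\kstar^{2a})/\kstar^{2p}$ yields $\dstar\asymp\kstar/n$.

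I expect the only genuinely delicate point to be the asymptotic inversion of the transcendental relation in the exponential case. Taking logarithms gives $\kstar^{2a}+(2p+1)\log\kstar\asymp\log n$; since $\log\kstar=O(\log\log n)$ is negligible against $\kstar^{2a}$, one obtains $\kstar^{2a}=(\log n)(1+o(1))$ and therefore $\kstar\asymp(\log n)^{1/(2a)}$, whence $\dstar\asymp\kstar/n\asymp n^{-1}(\log n)^{1/(2a)}$, which is (ii). The subtle feature worth flagging is that the polynomial prefactor $\kstar^{2p+1}$, although irrelevant to the leading order of $\kstar$, is precisely what turns $\dstar=\exp(-\kstar^{2a})/\kstar^{2p}$ into the order $\kstar/n$ rather than an exponentially small quantity, so it must be retained. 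Finally, the existence of an integer $\kstar$ satisfying \eqref{gen:def:m-gam} for some $\triangle\geqslant1$ follows in both cases from the monotonicity and unbounded growth of $m\mapsto\bw_m^p\,m/(n\,\Gw_m)$.
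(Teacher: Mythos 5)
Your proposal is correct and follows essentially the same route as the paper's own proof: reduce the $\Gamma$-weighted risk to the $\cW_\Gw$-risk via the Heinz inequality under the link condition, apply Theorem \ref{gen:lower:theo} with $\hw\equiv\Gw$, and solve \eqref{gen:def:m-gam} for $\kstar$ and $\dstar$ in the polynomial and exponential cases. Your additional checks (Assumption \ref{ass:reg} for the triple $(\Gw,\bw^p,\Gw)$, the existence of an admissible $\kstar$, and the role of the prefactor $\kstar^{2p+1}$ in the exponential case) are all accurate and merely make explicit what the paper leaves implicit.
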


On the other hand, if the dimension parameter  $m$ and the threshold $\gamma$ in the definition of  the estimator $\widehat{\beta}$ given in \eqref{beta:s0} are chosen appropriately, then, by applying Theorem \ref{gen:upper:theo},  the rates of the lower bound given in the last assertion also provide, up to a constant,  the upper bound of the risk of the  estimator $\widehat{\beta}$, which is summarized in the next proposition. 

\begin{prop}\label{MPE:upper}\dr Under the assumptions of Theorem \ref{gen:lower:theo} consider the estimator $\widehat{\beta}$\\[-4ex]
\begin{itemize}\item[(i)] in the polynomial case, i.e. $\Gw_1=1$ and $\Gw_j  = |j|^{-2a}$, $j\geqslant 2$, for some $a>1/2$,  with $m\sim n^{1/(2p+2a+1)}$ and threshold $\gamma=n$. If in addition  $k\geqslant 2+8/(2p+2a-1),$ then\\[1ex] 
\hspace*{5ex}$\sup_{\beta \in \cW_p^\rho} \bigl\{ \Ex\normV{\widehat{\beta}-\beta}_\Gamma^2\bigr\}\lesssim n^{-(2p+2a)/(2p+2a+1)}$,
\item[(ii)] in the exponential case, i.e. $\Gw_1=1$ and $\Gw_j  = \exp(-|j|^{2a})$, $j\geqslant 2$, for some $a>0$,  with $m\sim (\log n)^{1/(2a)}$ and threshold $\gamma=n$. Then\\[1ex] 
\hspace*{5ex}$\sup_{\beta \in \cW_p^\rho} \bigl\{ \Ex\normV{\widehat{\beta}-\beta}^2_\Gamma\bigr\}\lesssim n^{-1}(\log n)^{1/2a}$.
\end{itemize}
 \end{prop}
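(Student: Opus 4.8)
The plan is to derive both statements from the general upper bound in Theorem \ref{gen:upper:theo}, applied with the weight sequence $\hw\equiv\Gw$. Since $\Gamma\in\cN_\Gw^\Gd$, the inequality of \cite{Heinz51} gives $\Ex\normV{\widehat\beta-\beta}_\Gamma^2\asymp_\Gd\Ex\normV{\widehat\beta-\beta}_\Gw^2$, so it suffices to control the $\cW_\hw$-risk with $\hw\equiv\Gw$. With this identification the basic regularity Assumption \ref{ass:reg} is readily checked: $\bw=\bw^p$ from \eqref{form:def:b} is non-decreasing, $(\bw_j/\Gw_j)$ is non-decreasing because $\bw_j$ increases while $\Gw_j$ decreases, $\Gw$ is non-increasing, $(\Gw_j^2/\hw_j)=\Gw_j$ is non-increasing, and $\Lw=\sum_j\Gw_j<\infty$ in both the polynomial ($a>1/2$) and the exponential ($a>0$) case. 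It therefore remains to determine $\kstar,\dstar$ from \eqref{gen:def:m-gam} and to verify the moment-balance condition \eqref{gen:upper:varphi:cond}.

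First I would compute $\kstar$ and $\dstar$. Because $\hw\equiv\Gw$, the sum in \eqref{gen:def:m-gam} collapses to $\sum_{j=1}^{\kstar}\hw_j/\Gw_j=\kstar$, so the defining relation reads $b_{\kstar}\kstar/(n\Gw_{\kstar})\asymp1$ together with $\dstar=\Gw_{\kstar}/b_{\kstar}$; combining the two immediately yields the convenient identity $\dstar\asymp\kstar/n$. Using $b^p_m\asymp m^{2p}$, the polynomial weights $\Gw_{\kstar}\asymp\kstar^{-2a}$ turn the balance into $\kstar^{2p+2a+1}/n\asymp1$, so that $\kstar\sim n^{1/(2p+2a+1)}$ and $\dstar\asymp\kstar/n\asymp n^{-(2p+2a)/(2p+2a+1)}$; the exponential weights $\Gw_{\kstar}\asymp\exp(-\kstar^{2a})$ make the exponential factor dominate, forcing $\kstar^{2a}\sim\log n$, i.e. $\kstar\sim(\log n)^{1/(2a)}$ and $\dstar\asymp\kstar/n\asymp n^{-1}(\log n)^{1/(2a)}$. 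In each case $\dstar$ reproduces the lower bound of Proposition \ref{MPE:lower}, and since $\bw_{\kstar}\to\infty$ the prescribed threshold $\gamma=n\max(1,8\Gd^3\triangle/\bw_{\kstar})$ equals $n$ for all large $n$, matching the stated choice.

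The remaining, and genuinely computational, step is to verify \eqref{gen:upper:varphi:cond} for a suitable $k$. The middle condition is harmless: with $\hw\equiv\Gw$ one has $\sup_{j\leqslant\kstar}\{\hw_j/\Gw_j\}=1$, and $\dstar\asymp\kstar/n$ gives $\kstar/(\dstar n)\asymp1=O(1)$ unconditionally. In the exponential case $\kstar$ grows only polylogarithmically whereas the denominators in $\kstar^{2k}/(\dstar n^k)$ and $\kstar^{2+k}/n^{k/2-1}$ carry positive powers of $n$, so both conditions already hold at $k=4$; this is precisely why part (ii) needs no extra moment hypothesis. In the polynomial case I would substitute the powers of $n$ and track exponents: the first condition amounts to $k\geqslant(2p+2a)/(2p+2a-1)$, while the third, $\kstar^{2+k}/n^{k/2-1}\asymp n^{(2+k)/(2p+2a+1)-(k/2-1)}$, has non-positive exponent exactly when $k\geqslant 2(2p+2a+3)/(2p+2a-1)=2+8/(2p+2a-1)$, which is the binding requirement and the condition stated in (i). The main obstacle is thus this exponent bookkeeping together with the bookkeeping needed to confirm that an integer $k\geqslant4$ can be chosen satisfying \eqref{gen:upper:varphi:cond} and the moment hypotheses $X\in\cX_\eta^{4k}$, $\epsilon\in\cE_\eta^{4k}$ of Theorem \ref{gen:upper:theo}. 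Once such a $k$ is fixed, Theorem \ref{gen:upper:theo} yields $\sup_{\beta\in\cW_p^\rho}\Ex\normV{\widehat\beta-\beta}_\Gw^2\lesssim\dstar$, and the Heinz equivalence converts this into the asserted bound on the mean prediction error.
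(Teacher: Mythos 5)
Your proposal is correct and follows essentially the same route as the paper: reduce the $\Gamma$-risk to the $\cW_\Gw$-risk via Heinz's inequality, apply Theorem \ref{gen:upper:theo} with $\hw\equiv\Gw$, identify $\kstar\sim n^{1/(2p+2a+1)}$ (resp.\ $(\log n)^{1/(2a)}$) and $\dstar\asymp\kstar/n$ from \eqref{gen:def:m-gam}, and check \eqref{gen:upper:varphi:cond}. The only difference is that you carry out explicitly the exponent bookkeeping (correctly identifying $\kstar^{2+k}/n^{k/2-1}=O(1)$ as the binding constraint giving $k\geqslant 2+8/(2p+2a-1)$) that the paper dismisses with ``it is easily seen''.
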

We have thus proved that these rates are  optimal and the proposed estimator $\widehat{\beta}$ is minimax optimal in both cases. It is worth noting  that replacing the condition $\gamma=n$ by $\gamma=c\,n$ with $c>0$ appropriately chosen,  Proposition  \ref{MPE:upper} remains true when $p=0$, that is to say when $\beta$ is just supposed to be square integrable.

\begin{rem}It is of interest to compare our results with those of \cite{CrambesKneipSarda2007} who measure the performance of their estimator in terms of squared prediction error. In their notations the decay of the eigenvalues of $\Gamma$ is assumed to be of order $ (|j|^{-2q-1})$, i.e., $q=a-1/2$.  Furthermore they  suppose the slope function to be $m$-times continuously differentiable, i.e., $m=p$. By using this parametrization we see that our results in the polynomial case imply the same rate of convergence in probability of the prediction error as it is presented in  \cite{CrambesKneipSarda2007}.  However, from our general results follows a lower and an upper bound of the MPE not only in the polynomial case but also  in the exponential case. 

Furthermore, we shall emphasize the interesting influence of the parameters $p$ and $a$ characterizing the smoothness of $\beta$ and the smoothing properties of $\Gamma$, respectively. As we see from Propositions \ref{MPE:lower} and \ref{MPE:upper},  in the polynomial case  an increasing  value of $p$ leads to a faster optimal rate. In other words, as expected, a smoother  regression function can be faster estimated. 
The situation in the exponential case is extremely different. It seems rather surprising that, contrary  to the polynomial case, in the exponential case the optimal rate of convergence does not depend on the value of $p$, however this dependence is clearly hidden in the constant. Furthermore, the parameter $m$  does not even depend on the value of $p$.  Thereby, the proposed estimator is automatically adaptive, i.e., it does not involve an a-priori knowledge of the degree of smoothness of the slope function $\beta$. However, the choice of the smoothing parameter depends on  the value $a$ specifying the decay of $\{\Gw_j\}$. Note further that in  both cases 
 an increasing value of $a$ leads to a faster optimal rate of convergence, i.e., we may call $1/a$  as {\it degree of ill-posedness} (c.f. \cite{Natterer84}). \hfill$\square$\end{rem}
\paragraph{Estimation of the derivatives.} 
Let us consider now the estimation of derivatives of the slope function $\beta$. It is well-known, that for any function $g$ belonging to a  Sobolev-ellipsoid $\cW_{p}^\br$   the Sobolev norm  $\normV{g}_{\bw^s}$ for each $0\leqslant s\leqslant p$  is equivalent to the $L^2$-norm of the $s$-th weak derivative $g^{(s)}$, i.e.,  $\normV{g^{(s)}}$. Thereby, the results in the previous Section 
imply again a lower bound as well as an upper bound of the $L^2$-risk for the estimation of the  $s$-th weak derivative of  $\beta$. In the following we consider again the two particular cases of polynomial and exponential decreasing rates for the sequence of weights $(\Gw_j)$. The next assertion summarizes then lower bounds for the $L^2$-risk for the  estimation  of the $s$-th weak derivative of $\beta$ in both cases. 

\begin{prop}\label{MSE:lower}\dr \mbox{Under the assumptions of Theorem \ref{gen:lower:theo} we have for any estimator $ \widetilde{\beta}^{(s)}$}\\[-4ex]
\begin{itemize}\item[(i)] in the polynomial case, i.e. $\Gw_1=1$ and $\Gw_j  = |j|^{-2a}$, $j\geqslant 2$, for some $a>1/2$,    that\\[1ex] 
\hspace*{5ex}$\sup_{\beta \in \cW_p^\rho} \bigl\{ \Ex\normV{\widetilde{\beta}^{(s)}-\beta^{(s)}}^2\bigr\}\gtrsim n^{-(2p-2s)/(2p+2a+1)} $,
\item[(ii)] in the exponential case, i.e. $\Gw_1=1$ and $\Gw_j  = \exp(-|j|^{2a})$, $j\geqslant 2$, for some $a>0$,    that\\[1ex] 
\hspace*{5ex}$\sup_{\beta \in \cW_p^\rho} \bigl\{ \Ex\normV{\widetilde{\beta}^{(s)}-\beta^{(s)}}^2\bigr\}\gtrsim (\log n)^{-(p-s)/a}$.
\end{itemize}
\end{prop}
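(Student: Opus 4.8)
The plan is to reduce the estimation of the $s$-th derivative to the estimation of the slope function itself, measured in a suitably weighted norm, and then to invoke Theorem \ref{gen:lower:theo}. Recall from the discussion preceding the proposition that for $0\leqslant s\leqslant p$ the $L^2$-norm $\normV{g^{(s)}}$ of the $s$-th weak derivative is equivalent to the Sobolev norm $\normV{g}_{\bw^s}$ associated with the weight sequence $\bw^s=(b_j^s)_{j\in\N}$, $b_1^s=1$, $b_{2j}^s=b_{2j+1}^s=j^{2s}$. The perturbations entering Assouad's cube in the proof of Theorem \ref{gen:lower:theo} are supported on coefficients with indices of order $\kstar$, that is on high frequencies, where differentiation of the trigonometric basis \eqref{model:def:trigon} multiplies the $k$-th frequency coefficients by $(2\pi k)^s$ (up to a permutation within each sine and cosine pair); on such functions $\normV{\cdot}_{\bw^s}$ and $\normV{(\cdot)^{(s)}}$ coincide up to the constant factor $(2\pi)^{2s}$. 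Hence the $L^2$-risk $\Ex\normV{\widetilde\beta^{(s)}-\beta^{(s)}}^2$ is bounded below, up to a constant, by the $\cW_\hw$-risk $\Ex\normV{\widetilde\beta-\beta}^2_\hw$ with the choice $\hw:=\bw^s$, and it suffices to prove the announced lower bounds for the latter.

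Since $\beta$ ranges over $\cW_p^\br=\cW_{\bw^p}^\br$, the role of the ellipsoid sequence $\bw$ in Theorem \ref{gen:lower:theo} is played by $\bw^p$. With this identification and $\hw=\bw^s$, I would first verify Assumption \ref{ass:reg}: the sequence $\bw^p$ is non-decreasing, the ratio $b_j^p/b_j^s\asymp j^{2(p-s)}$ is non-decreasing because $s\leqslant p$, the sequence $\Gw$ is non-increasing by construction, $\Gw_j^2/\hw_j$ is non-increasing because in both cases the decay of $\Gw_j^2$ dominates the polynomial growth $j^{2s}$, and $\Lw=\sum_j\Gw_j<\infty$. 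Theorem \ref{gen:lower:theo} is therefore applicable and yields $\sup_{\beta\in\cW_p^\br}\Ex\normV{\widetilde\beta-\beta}^2_\hw\gtrsim\dstar$ with $\dstar=\hw_{\kstar}/b_{\kstar}=b_{\kstar}^s/b_{\kstar}^p\asymp\kstar^{-(2p-2s)}$, provided $\kstar$ is chosen according to the defining relation \eqref{gen:def:m-gam}.

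It remains to solve \eqref{gen:def:m-gam} for $\kstar$ and to read off $\dstar$, using $b_{\kstar}\asymp\kstar^{2p}$ and $\hw_{\kstar}\asymp\kstar^{2s}$. In the polynomial case $\Gw_j\asymp j^{-2a}$ gives $\sum_{j=1}^{\kstar}\hw_j/\Gw_j=\sum_{j=1}^{\kstar}j^{2s+2a}\asymp\kstar^{2s+2a+1}$, whence $(b_{\kstar}/(n\hw_{\kstar}))\sum_{j=1}^{\kstar}\hw_j/\Gw_j\asymp\kstar^{2p+2a+1}/n$; balancing this to a constant forces $\kstar\asymp n^{1/(2p+2a+1)}$ and therefore $\dstar\asymp n^{-(2p-2s)/(2p+2a+1)}$, which is assertion (i). In the exponential case $\Gw_j=\exp(-j^{2a})$ gives $\sum_{j=1}^{\kstar}\hw_j/\Gw_j=\sum_{j=1}^{\kstar}j^{2s}\exp(j^{2a})\asymp\kstar^{2s}\exp(\kstar^{2a})$, the sum being dominated by its last summand, so that $(b_{\kstar}/(n\hw_{\kstar}))\sum_{j=1}^{\kstar}\hw_j/\Gw_j\asymp\kstar^{2p}\exp(\kstar^{2a})/n$; the balance $\kstar^{2p}\exp(\kstar^{2a})\asymp n$ yields $\kstar\asymp(\log n)^{1/(2a)}$ and hence $\dstar\asymp\kstar^{-(2p-2s)}\asymp(\log n)^{-(p-s)/a}$, which is assertion (ii).

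The main technical obstacle is the exponential case. One must justify that $\sum_{j=1}^{\kstar}j^{2s}\exp(j^{2a})$ is of the exact order of its last term, which follows because the ratio of consecutive summands diverges, and then invert $\kstar^{2p}\exp(\kstar^{2a})\asymp n$. Passing to logarithms gives $\kstar^{2a}+2p\log\kstar=\log n+O(1)$; since $\log\kstar=O(\log\log n)$ is negligible against $\kstar^{2a}\asymp\log n$, the polynomial prefactor $\kstar^{2p}$ affects only the constant and not the leading behaviour $\kstar\asymp(\log n)^{1/(2a)}$. This is precisely why the exponential rate $(\log n)^{-(p-s)/a}$ retains its dependence on $p$ and $s$ solely through the exponent, in contrast to the polynomial case where $p$ enters the power of $n$.
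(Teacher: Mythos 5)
Your proposal is correct and follows essentially the same route as the paper: reduce the $L^2$-risk for $\beta^{(s)}$ to the $\cW_\hw$-risk with $\hw=\bw^s$ via the norm equivalence, check Assumption \ref{ass:reg}, and solve \eqref{gen:def:m-gam} for $\kstar$ and $\dstar$ in the two cases (you even obtain the correct exponential-case order $\dstar\asymp(\log n)^{-(p-s)/a}$, where the paper's written proof inadvertently repeats the prediction-error rate). One small inaccuracy: for $0<a<1/2$ the ratio of consecutive terms of $\sum_{j\leqslant\kstar}j^{2s}\exp(j^{2a})$ does not diverge and the sum is of order $\kstar^{2s+1-2a}\exp(\kstar^{2a})$ rather than its last term, but this extra polynomial factor is absorbed when inverting $\exp(\kstar^{2a})\asymp n$ and leaves $\kstar\asymp(\log n)^{1/(2a)}$ and the final rate unchanged.
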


On the other hand considering the estimator $\widehat\beta$ given in \eqref{beta:s0}, we only have to calculate the $s$-th weak derivative of $\beta$. Given the exponential basis, which is linked to the trigonometric basis by the relation $\exp(2 \i \pi k  t) = 2^{-1/2} ( \psi_{2k}(t) + \i \ \psi_{2k+1}(t)),$ for  $k \in \Z$ and $t \in [0,1],$ with $\i^2=-1,$ we recall that    for $0\leqslant s <p$ the $s$-th derivative $\beta^{(s)}$ of $\beta$ in a weak sense  satisfies 
$$
\beta^{(s)}(t) =\sum_{k\in\Z} ( 2\i \pi k)^s  \left( \int_0^1 \beta(u) \exp(-2 \i \pi k u) \ du \right)  \exp(2 \i  \pi k t).
$$

 Given a dimension $m\geqslant 1,$ we denote now by $[\widehat{\Gamma}]_{\um}$ the $(2m+1)\times (2m+1)$ matrix with generic elements $\skalarV{\widehat{\Gamma} \psi_{\ell},\psi_{j}}, -m\leqslant j,\ell\leqslant m$ and by $[\widehat{g}]_{\um}$ the $2m+1$ vector with elements $\skalarV{\widehat{g},\psi_{\ell}}, -m\leqslant \ell\leqslant m.$ Furthermore for integer $s$ define the diagonal matrix $\bigtriangledown^{1/2}_{\um}$ with entries $\bigtriangledown^{1/2}_{j,j}:=(2 \i \pi j)^s$, $-m\leqslant j\leqslant m$. Then we consider the estimator of $\beta^{(s)}$ defined by 
\begin{multline}\label{sob:def:est}
\widehat{\beta}^{(s)}:=[\widehat{\beta}^{(s)}]_{\um}^t [\psi]_{\um}(\cdot)\quad\mbox{ with}\\
[\widehat{\beta}^{(s)}]_{\um}=
\left\{\begin{array}{lcl} 
\bigtriangledown^{s/2}_{\um}[\widehat{\Gamma}]_{\um}^{-1} [\widehat{g}]_{\um}, && \mbox{if $[\widehat{\Gamma}]_{\um}$ is nonsingular}\\&&\qquad\mbox{and }\normV{[\widehat{\Gamma}]^{-1}_{\um}}^2\leqslant \gamma,\\
0,&&\mbox{otherwise}.
\end{array}\right.
\end{multline}
Furthermore,  if the dimension parameter  $m$ and the threshold $\gamma$ in the definition of  the estimator $\widehat{\beta}^{(s)}$ given in \eqref{sob:def:est} are chosen appropriately, then by applying Theorem \ref{gen:upper:theo}  the rates of the lower bound given in the last assertion provide up to a constant again the upper bound of the $L^2$-risk of the  estimator $\widehat{\beta}^{(s)}$, which is summarized in the next proposition. We have thus proved that these rates are  optimal and the proposed estimator $\widehat{\beta}^{(s)}$ is minimax optimal in both cases.

\begin{prop}\label{MSE:upper}\dr Under the assumptions of Theorem \ref{gen:lower:theo} consider the estimator $\widehat{\beta}^{(s)}$\\[-4ex]
\begin{itemize}\item[(i)] in the polynomial case, i.e. $\Gw_1=1$ and $\Gw_j  = |j|^{-2a}$, $j\geqslant 2$, for some $a>1/2$,  with $m\sim n^{1/(2p+2a+1)}$ and threshold $\gamma=n$. If in addition  $k\geqslant 2+8/(2p+2a-1),$ then\\[1ex] 
\hspace*{5ex}$\sup_{\beta \in \cW_p^\rho} \bigl\{ \Ex\normV{\widehat{\beta}^{(s)}-\beta^{(s)}}^2\bigr\}\lesssim n^{-(2p-2s)/(2p+2a+1)}$,
\item[(ii)] in the exponential case, i.e. $\Gw_1=1$ and $\Gw_j  = \exp(-|j|^{2a})$, $j\geqslant 2$, for some $a>0$,  with $m\sim (\log n)^{1/(2a)}$ and threshold $\gamma=n$. Then\\[1ex] 
\hspace*{5ex}$\sup_{\beta \in \cW_p^\rho} \bigl\{ \Ex\normV{\widehat{\beta}^{(s)}-\beta^{(s)}}^2\bigr\}\lesssim (\log n)^{-(p-s)/a}$.
\end{itemize}
 \end{prop}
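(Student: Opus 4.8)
The plan is to reduce the $L^2$-risk of the derivative estimator to a weighted risk for the slope estimator and then to invoke Theorem~\ref{gen:upper:theo}, exactly along the lines of Proposition~\ref{MPE:upper}. Comparing the coordinate formulas \eqref{beta:s0} and \eqref{sob:def:est}, the estimator $\widehat{\beta}^{(s)}$ is nothing but the weak $s$-th derivative of the thresholded estimator $\widehat{\beta}$, its coefficients being those of $\widehat{\beta}$ multiplied by $(2\i\pi k)^s$ (the threshold only decides between the least-squares vector and zero). Hence, by Parseval's identity in the trigonometric basis, $\normV{\widehat{\beta}^{(s)}-\beta^{(s)}}^2=\sum_k|2\pi k|^{2s}|\skalarV{\widehat{\beta}-\beta,\psi_k}|^2\asymp\normV{\widehat{\beta}-\beta}^2_{\bw^s}$, so that estimating $\beta^{(s)}$ in $L^2$ is equivalent to estimating $\beta$ in the $\cW_\hw$-norm with $\hw:=\bw^s$, i.e. $\hw_j\asymp|j|^{2s}$. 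It therefore suffices to bound $\sup_{\beta\in\cW_p^\br}\Ex\normV{\widehat{\beta}-\beta}^2_{\bw^s}$.

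First I would confirm that the triple $(\bw^p,\bw^s,\Gw)$ satisfies Assumption~\ref{ass:reg}: $\bw^p$ is non-decreasing, $(\bw^p_j/\bw^s_j)\asymp|j|^{2(p-s)}$ is non-decreasing because $s\leqslant p$, and both $\Gw$ and $(\Gw_j^2/\bw^s_j)$ are non-increasing in the two regimes. I would then read off $\kstar$ and $\dstar=\bw^s_{\kstar}/\bw^p_{\kstar}\asymp\kstar^{-2(p-s)}$ from \eqref{gen:def:m-gam}. A short computation shows that the cut-off $\kstar$ is, to leading order, independent of $s$ and coincides with the one used for the mean prediction error, namely $\kstar\sim n^{1/(2p+2a+1)}$ in the polynomial and $\kstar\sim(\log n)^{1/(2a)}$ in the exponential case; substituting back yields $\dstar\sim n^{-(2p-2s)/(2p+2a+1)}$ and $\dstar\sim(\log n)^{-(p-s)/a}$ respectively, i.e. exactly the target rates, while $\gamma=n$ is the simplified form of $n\max(1,8\Gd^3\triangle/\bw_{\kstar})$ once $\bw^p$ is increasing. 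Combined with the matching lower bounds of Proposition~\ref{MSE:lower}, an upper bound of order $\dstar$ then gives minimax optimality.

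In the polynomial case the growth conditions \eqref{gen:upper:varphi:cond} are routine. Writing $\theta=1/(2p+2a+1)$ so that $\kstar\sim n^\theta$, the first and third conditions reduce to inequalities between powers of $n$; the binding one is the third, $\kstar^{2+k}/n^{k/2-1}=O(1)$, which rearranges to $k\geqslant 2+8/(2p+2a-1)$ --- exactly the assumption in part~(i). The middle condition holds because $\hw_j/\Gw_j\asymp|j|^{2s+2a}$ has of order $\kstar$ terms comparable to its maximum, whence $\kstar\sup_{j\leqslant\kstar}(\hw_j/\Gw_j)\asymp\sum_{j\leqslant\kstar}\hw_j/\Gw_j\asymp\dstar\, n$. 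Theorem~\ref{gen:upper:theo} then yields $\sup_{\beta\in\cW_p^\br}\Ex\normV{\widehat{\beta}^{(s)}-\beta^{(s)}}^2\lesssim\dstar$, proving part~(i).

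The exponential case is where I expect the real difficulty. Since $\kstar$ is only poly-logarithmic in $n$, the first and third conditions of \eqref{gen:upper:varphi:cond} hold trivially (a power of $\log n$ against a genuine power of $n$). The middle condition, however, behaves very differently from the mean-prediction-error situation: there $\hw\equiv\Gw$ gave a flat ratio, whereas here $\hw_j/\Gw_j\asymp|j|^{2s}\exp(|j|^{2a})$ is sharply increasing, so its partial sum is dominated by the single top index and $\kstar\sup_{j\leqslant\kstar}(\hw_j/\Gw_j)$ exceeds $\sum_{j\leqslant\kstar}\hw_j/\Gw_j\asymp\dstar\, n$ by a factor of order $\kstar$ (or $\kstar^{2a}$), hence a power of $\log n$. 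Thus \eqref{gen:upper:varphi:cond} cannot be quoted verbatim. The repair I would carry out is to revisit the single remainder term in the proof of Theorem~\ref{gen:upper:theo} that this middle condition is designed to control and to bound it directly by the trace $n^{-1}\sum_{j\leqslant\kstar}\hw_j/\Gw_j$, which equals $\dstar$ up to constants by \eqref{gen:def:m-gam}, rather than by the cruder sufficient majorant $(\kstar/n)\sup_{j\leqslant\kstar}(\hw_j/\Gw_j)$ that produces the spurious logarithmic factor. Since $\kstar$ is tiny this direct control is feasible, and it restores the clean bound $\sup_{\beta\in\cW_p^\br}\Ex\normV{\widehat{\beta}^{(s)}-\beta^{(s)}}^2\lesssim\dstar\sim(\log n)^{-(p-s)/a}$, which combined with Proposition~\ref{MSE:lower}(ii) yields part~(ii).
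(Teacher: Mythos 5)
Your overall route is the one the paper intends: identify $\normV{\widehat{\beta}^{(s)}-\beta^{(s)}}^2$ with $\normV{\widehat{\beta}-\beta}^2_{\bw^s}$, take $\hw=\bw^s$ in the general machinery, read off $\kstar$ and $\dstar$ from \eqref{gen:def:m-gam}, and invoke Theorem \ref{gen:upper:theo}. The paper's own proof consists of the single sentence ``analogous to Proposition \ref{MPE:upper}, details omitted,'' so for part (i) your verification --- in particular that the binding constraint among the three conditions in \eqref{gen:upper:varphi:cond} is $\kstar^{2+k}/n^{k/2-1}=O(1)$, which rearranges exactly to $k\geqslant 2+8/(2p+2a-1)$ --- is precisely the omitted computation, and it is correct.

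Your observation about part (ii) is also correct, and it exposes a real gap in the paper's omitted argument rather than in yours. With $\hw_j\asymp j^{2s}$ and $\Gw_j=\exp(-j^{2a})$ one has $\dstar n\asymp\sum_{j\leqslant\kstar}\hw_j/\Gw_j$ by \eqref{gen:def:m-gam}, while $\kstar\sup_{j\leqslant\kstar}(\hw_j/\Gw_j)$ exceeds that sum by a factor $\kstar^{\min(1,2a)}\asymp(\log n)^{\min(1/(2a),1)}$ (the sum is concentrated on its top term when $a\geqslant 1/2$, and on the top $O(\kstar^{1-2a})$ terms otherwise), so the middle condition of \eqref{gen:upper:varphi:cond} genuinely fails and a verbatim application of Theorem \ref{gen:upper:theo} would only give the rate $\dstar$ times a power of $\log n$. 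This does not happen for the MPE because there $\hw\equiv\Gw$ makes the ratio flat. Your repair is the right one: the offending factor $\sup_{j\leqslant m}(\hw_j/\Gw_j)\cdot(m/n)$ arises from bounding $\Ex\normV{[\Diag(\hw)]^{1/2}_{\um}[\Gamma]^{-1/2}_{\um}\{[T_n]_{\um}+[W_n]_{\um}\}}^2$ by an operator norm times Lemma \ref{pr:upper:l2}, and a direct trace computation (e.g.\ $\sum_{j\leqslant m}\hw_j\,\Ex|([\Gamma]^{-1/2}_{\um}[W_n]_{\um})_j|^2\lesssim \sigma^2\eta\, n^{-1}\sum_{j\leqslant m}\hw_j\leqslant \sigma^2\eta\, n^{-1}\sum_{j\leqslant m}\hw_j/\Gw_j\asymp\sigma^2\eta\,\dstar$, and similarly for $T_n$) removes the spurious logarithm; the remaining cross term carrying $(P(\Omega_{1/2}^c))^{1/2}$ is harmless since it decays polynomially in $n$ while $\kstar$ is only polylogarithmic. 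So your proof is correct and, on part (ii), more careful than the paper's.
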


\begin{rem}\label{rem:sob:upper:pol:1}It is worth noting that the $L^2$-risk in estimating the slope function $\beta$ itself, i.e., $s=0$, has been considered in \cite{HallHorowitz2007} only in the polynomial case. In their notations the decrease of the eigenvalues of $\Gamma$ is of order $ (|j|^{-\alpha})$, i.e., $\alpha=2a$.  Furthermore the Fourier coefficients of  the slope function decay at least with rate $j^{-\beta}$, i.e., $\beta=p+1/2$. By using this new parametrization we see that we recover the result of \cite{HallHorowitz2007} in the polynomial case with $s=0$, but without the additional assumption $ \beta > \alpha/2+1$ or $\beta >\alpha-1/2$. 

Furthermore, we shall discuss again  the interesting influence of the parameters $p$ and $a$. As we see from Propositions \ref{MSE:lower} and \ref{MSE:upper}, in both cases an  decreasing of the value of $a$ or an increasing of the value $p$  leads to a faster  optimal rate of convergence.
Hence, in opposite to the MPE  by considering the $L^2$-risk the parameter $a$ describes in both cases the {\it degree of ill-posedness}. Furthermore, the estimation of higher derivatives of the slope function, i.e. by considering a larger value of $s$, is as usual only possible with a slower optimal rate. Finally,  
as for the MPE  in the exponential case the parameter $m$  does not depend on the values of $p$ or $s$, hence the proposed estimator is automatically adaptive. \hfill$\square$\end{rem}

\begin{rem}\label{rem:sob:upper:pol:2}There is an interesting hidden issue in the parametrization we have chosen. Consider a classical indirect regression model with known operator given by $\Gamma$, i.e., $Y=[\Gamma\beta](U)+\epsilon$ where $U$ has a uniform distribution on $[0,1]$ and $\epsilon$ is white noise  (for details see e.g. \cite{MairRuymgaart96}). If in addition the operator $\Gamma$ is finitely smoothing, i.e., $(\Gw_j)$ is polynomially decreasing with $\Gw_j=j^{-2a}$, $j\geqslant 2,$ then given an $n$-sample of $Y$ the optimal rate of convergence of the $\cW_s$-risk of any estimator of $\beta$ is of order $n^{-2(p-s)/[2(p+2a)+1]}$,  since $\cR(\Gamma)=\cW_{2a}$ (c.f. \cite{MairRuymgaart96} or \cite{ChenReiss2008}). However, we have shown that in a functional linear model even with estimated operator the optimal rate is of order $n^{-2(p-s)/[2(p+a)+1]}$. Thus comparing both rates we see that in a functional linear model the covariance operator $\Gamma$ has  the {\it degree of ill-posedness} $a$ while the same operator has, in the indirect regression model, a {\it degree of ill-posedness} $(2a)$. In other words in a functional linear model we do not face the complexity of an inversion of $\Gamma$ but only of its square root $\Gamma^{1/2}$. This, roughly speaking, may be seen as a multiplication of the normal equation $YX=\skalarV{\beta,X}X+X\epsilon$ by the inverse of $\Gamma^{1/2}$. Remarking  that $\Gamma$ is also the covariance operator associated to the error term  $\epsilon X,$ the multiplication by the inverse of $\Gamma^{1/2}$ leads, roughly speaking, to white noise.\hfill$\square$\end{rem}

\section{Concluding remarks and perspectives}\label{sec:concl}
We have proposed in this work a new kind of estimation procedures for the regression function and its derivatives in the functional linear model and proved they can attain optimal rates of convergence.

These estimators depend on two parameters which play the role of smoothing parameters, the dimension $m$ of the projection space and the threshold value $\gamma.$ Building data driven rules that can permit to choose automatically the values of these parameters is certainly a topic that deserves further attention and one promising direction is to adapt the selection technique proposed in  \cite{EfromovichKoltchinskii2001}, \cite{GoldPere2000} and \cite{Tsybakov2000}.

Another point of interest is to extend the thresholding approach in order to consider different thresholding rules for different  coordinates in the considered basis. This could lead for instance  with wavelet basis to  estimators that would adapt to sparseness as well as varying regularity of the regression function.

\appendix
\section{Appendix: Proofs}\label{app:proofs}
\subsection{Proofs of Section \ref{sec:gen}}\label{app:proofs:gen}
We begin by defining and recalling notations to be used in the proofs of this section. Given $m>0$, a Galerkin solution of $g=\Gamma\beta$ is denoted by $\beta^{m}\in\Psi_m$ (see equation \eqref{app:unknown:Galerkin}).
Furthermore, we use the notations 
\begin{multline}\label{app:l:upp:def}
\widetilde{\beta}^m:= [\widetilde{\beta}^m]_{\um}^t[\psi]_{\um}(\cdot)\quad \mbox{ with }\quad[\widetilde{\beta}^m]_{\um}:= [{\beta}^m]_{\um}\1\{\normV{[\widehat{\Gamma}]^{-1}_{\um}}\leqslant \gamma\},\\
[\widehat{\Gamma}]_{\um}=\frac{1}{n}\sum_{i=1}^n[X_i]_{\um}[X_i]_{\um}^t,\quad [\tilde X_i]_{\um}:=[\Gamma]_{\um}^{-1/2} [X_i]_{\um} ,\quad [\tilde\Gamma]_{\um}:=\frac{1}{n}\sum_{i=1}^n[\tilde X_i]_{\um}[\tilde X_i]_{\um}^t,\\
[\Xi_{n}]_{\um}:= [\tilde\Gamma]_{\um} - I_m,\quad [T_{n}]_{\um}:=
\frac{1}{n}\sum_{i=1}^n \skalarV{X_i,\beta-\beta^m} [X_i]_{\um},\quad [W_{n}]_{\um}:=\frac{\sigma}{n}\sum_{i=1}^n \epsilon_i [X_i]_{\um},
\end{multline} 
where $[\widehat{g}]_{\um}- [\widehat{\Gamma}]_{\um} [\beta^m]_{\um} = [T_{n}]_{\um}+ [W_{n}]_{\um}$ with $\Ex [T_{n}]_{\um} =[\Gamma(\beta-\beta^m)]_{\um}=0$ and  $\Ex [W_{n}]_{\um} =0$,  
 $\Ex[\widehat{\Gamma}]_{\um}=[\Gamma]_{\um}$,  $[\tilde\Gamma]_{\um} = [\Gamma]_{\um}^{-1/2} [\widehat{\Gamma}]_{\um}[\Gamma]_{\um}^{-1/2}$ and hence $\Ex [\Xi_{n}]_{\um}\equiv 0$. Moreover, let us introduce the events 
 \begin{multline}\label{app:l:upp:def:o}
\Omega:=\{ \normV{[\widehat{\Gamma}]^{-1}_{\um}}\leqslant \gamma\},\quad  \Omega_{1/2}:= \{\normV{[\Xi_{n}]_{\um}}\leqslant 1/2\}\\
\Omega^c:=\{ \normV{[\widehat{\Gamma}]^{-1}_{\um}}> \gamma\}\quad\mbox{ and }\quad  \Omega_{1/2}^c=\{\normV{[\Xi_{n}]_{\um}}> 1/2\}.
\end{multline}
Observe that $\Omega_{1/2} \subset\Omega$ in case $\gamma \geqslant2\normV{[{\Gamma}]^{-1}_{\um}}$. Indeed, if $\normV{[\Xi_{n}]_{\um}}\leqslant 1/2$  then the identity $[\widehat{\Gamma}]_{\um}= [{\Gamma}]^{1/2}_{\um}\{I+[\Xi_{n}]_{\um}\}[{\Gamma}]^{1/2}_{\um}$ implies by the usual Neumann series argument that $\normV{[\widehat{\Gamma}]^{-1}_{\um}} \leqslant 2\normV{[{\Gamma}]^{-1}_{\um}}$. Thereby, if $\gamma \geqslant 2\normV{[{\Gamma}]^{-1}_{\um}}$, then we have $\Omega_{1/2} \subset\Omega$. These results will be used below without further reference.

We shall prove in the end of this section the two technical Lemma \ref{pr:upper:l2} and \ref{app:lower:lem1} which are used in the following proofs. 
\paragraph{Proof of the consistency.}
\begin{proof}[\noindent\textcolor{darkred}{\sc Proof of Proposition \ref{gen:prop1}.}] The proof is based on the  decomposition \begin{equation}\label{gen:dec}
 \Ex\normV{\widehat{\beta}-\beta}^2_\hw\leqslant 2\{\Ex\normV{\widehat{\beta}-
 \widetilde{\beta}^m}^2_\hw+\Ex\normV{\widetilde{\beta}^m- {\beta}}^2_\hw\}.
\end{equation}
 Since $\gamma \geqslant2\normV{[{\Gamma}]^{-1}_{\um}}$ it follows that $ \Omega^c\subset \Omega^c_{1/2}$ and hence
\begin{equation}\label{pr:upper:gen:prop1:e2}
\Ex\normV{\widetilde{\beta}^m-\beta}^2_\hw\leqslant 2\{\normV{\beta^m-\beta}^2_\hw+ \normV{\beta^m}^2_\hw\,P(\Omega_{1/2}^c)\}.
\end{equation}
On the other hand  we show  below  for some constant $C>0$  the following  bound
\begin{multline}\label{pr:upper:gen:prop1:e1}
\Ex\normV{\widehat{\beta}- \widetilde{\beta}^m}^2_{\hw}
\leqslant C\cdot \normV{[\Diag(\hw)]^{1/2}_{\um}\,[{\Gamma}]_{\um}^{-1/2}}^2 \, (m/n)\, \eta \, \Bigl\{\sigma^{2}+  \normV{\beta-\beta^m}^{2}\, \Ex\normV{X}^2 \Bigr\} \\
\,\Bigl\{ 1 + \gamma^2   \, m^{2}/n \, \eta^{-1/2}(P(\Omega_{1/2}^c))^{1/2} \normV{[{\Gamma}]_{\um}}^2  \Bigr\},  \end{multline}
where  by applying Markov's inequality    \eqref{pr:upper:l2:e3}   in Lemma \ref{pr:upper:l2} implies $P(\Omega_{1/2}^c)\leqslant C \eta m^2/n$ for some $C>0$. Moreover, $\normV{[{\Gamma}]_{\um}}^2 \leqslant \normV{{\Gamma}}^2$ and  $\normV{[\Diag(\hw)]^{1/2}_{\um}\,[{\Gamma}]_{\um}^{-1/2}}^2\leqslant \gamma \sup_{1\leqslant j\leqslant m} \{\hw_j\} $ since $\gamma \geqslant2\normV{[{\Gamma}]^{-1/2}_{\um}}^2$, which by combination of \eqref{pr:upper:gen:prop1:e2} and \eqref{pr:upper:gen:prop1:e1} leads to the estimate
\begin{multline}\label{pr:upper:gen:prop1:e3}
\Ex\normV{\widehat{\beta}- \beta}^2_{\hw}
\leqslant C\,\Bigl\{ \normV{\beta^m-\beta}^2_\hw+ \normV{\beta^m}^2_\hw\,(m^2/n)\,\eta \\+   \gamma \sup_{1\leqslant j\leqslant m}\{ \hw_j\} \, (m/n)\, \eta \, \{\sigma^{2}+  \normV{\beta-\beta^m}^{2}\, \Ex\normV{X}^2 \} 
\,\{ 1 + \gamma^2   \, (m^{3}/n^{1+1/2}) \,  \normV{{\Gamma}}^2 \} \end{multline}
for some $C>0$. Furthermore, for each $\beta\in \cW_\hw$, we have  $\normV{\beta-\beta^m}_\hw=o(1)$ as $m\to\infty$, which can be realized as follows. Since  $\normV{\Pi_m^\perp  \beta}=o(1)$ and $\normV{\Pi_m^\perp  \beta}_\hw=o(1)$ as $m\to\infty$ by using Lebesgue's dominated convergence theorem, the assertion follows from the identity $[\Pi_m \beta-\beta^m]_{\um} = -[\Gamma]_{\um}^{-1}[\Gamma \Pi_m^\perp \beta]_{\um}$ by using  that $\normV{\Pi_m \beta -\beta^m}_\hw\leqslant\normV{\Pi_m^\perp \beta   }_\hw \sup_m \normV{{\Gamma}^{-1}_{m} \Pi_m \Gamma \Pi_m^\perp}_\hw= O(\normV{\Pi_m^\perp \beta   }_\hw)$. Consequently, the
conditions on  $m$ and $\gamma$  ensure the convergence to zero as $n\to\infty$ of  the  bound given  in \eqref{pr:upper:gen:prop1:e3}, which proves the result.

Proof of \eqref{pr:upper:gen:prop1:e1}. From the 
 identity $[\widehat{g}]_{\um}-[\widehat{\Gamma}]_{\um}[\beta^m]_{\um} =[T_{n}]_{\um}+ [W_{n}]_{\um}$ it follows that
\begin{equation*}
\Ex\normV{\widehat{\beta}- \widetilde{\beta}^m}^2_\hw=
\Ex\normV{[\Diag(\hw)]^{1/2}_{\um}\,\{[{\Gamma}]_{\um}^{-1}+[\widehat{\Gamma}]_{\um}^{-1}([{\Gamma}]_{\um} - [\widehat{\Gamma}]_{\um})[{\Gamma}]_{\um}^{-1}\}\,\{[T_{n}]_{\um}+ [W_{n}]_{\um}\}}^{2}\1_\Omega.
\end{equation*}
Since $ 2\normV{[{\Gamma}]^{-1}_{\um}}\leqslant \gamma$ we have $\Omega_{1/2}\subset\Omega$, and  hence by using $ \normV{[\widehat{\Gamma}]_{\um}^{-1}}^2\1_{\Omega} \leqslant\gamma^2$ we obtain
\begin{multline*}
\Ex\normV{\widehat{\beta} -\widetilde{\beta}^m}_\hw^2\leqslant
3 \normV{[\Diag(\hw)]^{1/2}_{\um}\,[{\Gamma}]_{\um}^{-1/2}}^2 \, \Bigl\{ \Ex\normV{[{\Gamma}]_{\um}^{-1/2}\{[T_{n}]_{\um}+ [W_{n}]_{\um}\}}^2 \\
\hfill+    \gamma^2 \, \normV{[{\Gamma}]_{\um}}^2 \,  (\Ex\normV{[\Xi_{n}]_{\um}}^8)^{1/4}(\Ex\normV{[{\Gamma}]_{\um}^{-1/2}\{[T_{n}]_{\um}+ [W_{n}]_{\um}\}}^8)^{1/4}(P(\Omega_{1/2}^c))^{1/2}\\
+ \Ex\normV{\{I+[\Xi_{n}]_{\um}\}^{-1}}^2\normV{[\Xi_{n}]_{\um}}^2\normV{[{\Gamma}]_{\um}^{-1/2}\,\{[T_{n}]_{\um}+ [W_{n}]_{\um}\}}^{2}\1_{\Omega_{1/2}}\Bigr\}.
\end{multline*}
From \eqref{pr:upper:l2:e1}-\eqref{pr:upper:l2:e3}   in Lemma \ref{pr:upper:l2} together with $\normV{\{I+[\Xi_{n}]_{\um}\}^{-1}}\normV{[\Xi_{n}]_{\um}}\1_{\Omega_{1/2}}\leqslant 1$   follows then \eqref{pr:upper:gen:prop1:e1}, which completes the proof.\end{proof}

\begin{proof}[\noindent\textcolor{darkred}{\sc Proof of Corollary \ref{gen:coro1}.}] The link condition $\Gamma\in\cN^\Gd_\Gw$ implies $2\normV{[\Gamma]^{-1}_{\um}}\leqslant 8\Gd^3/\Gw_m= \gamma$, $\normV{[\Diag(\hw)]^{1/2}_{\um}\,[{\Gamma}]_{\um}^{-1/2}}^2\leqslant 4\Gd^3 \sup_{1\leqslant j\leqslant m} \{\hw_j/\Gw_j\} $ and 
$\normV{[{\Gamma}]_{\um}}^2\leqslant \Gd^2 $ by using the estimates \eqref{pr:upper:l3:e1-1}, \eqref{pr:upper:l3:e1-2}  and \eqref{pr:upper:l3:e1-3}  in Lemma \ref{pr:upper:l3}, respectively. Therefore, by combination of \eqref{pr:upper:gen:prop1:e2} and \eqref{pr:upper:gen:prop1:e1} in the proof of Proposition \ref{gen:prop1} we obtain
\begin{multline}\label{pr:upper:gen:coro1:e1}
\Ex\normV{\widehat{\beta}- \beta}^2_{\hw}
\leqslant C\,\Bigl\{ \normV{\beta^m-\beta}^2_\hw+ \normV{\beta^m}^2_\hw\,(m^2/n)\,\eta +  \Gd^3 \sup_{1\leqslant j\leqslant m}\{ \hw_j/\Gw_j\} \, (m/n)\\\, \eta \, \{\sigma^{2}+  \normV{\beta-\beta^m}^{2}\, \Ex\normV{X}^2 \} 
\,\{ 1 +  m^{3}/(n^{1+1/2}\Gw_m^2) \,   \Gd^8 \} \Bigr\}\end{multline}
for some $C>0$. By using  the identity $[\Pi_m \beta-\beta^m]_{\um} = -[\Gamma]_{\um}^{-1}[\Gamma \Pi_m^\perp \beta]_{\um}$ and the estimate \eqref{pr:upper:l3:e2:1} in the proof of Lemma \ref{pr:upper:l3} with $\bw\equiv \hw$  the link condition $\Gamma\in\cN^\Gd_\Gw$ implies further that $\normV{\Gamma^{-1}_{m} \Pi_m \Gamma \Pi_m^\perp}_\hw^2=\sup_{\normV{\beta}_\hw=1}\normV{\Pi_m \beta-\beta^m}^2_\hw\leqslant  2(1+\Gd^2)$ for all $m\in\N$. 
Therefore  we have  $\normV{\beta-\beta^m}_\hw=o(1)$ as $m\to\infty$ for each $\beta\in \cW_\hw$. Consequently, the
conditions on  $m$ and $\gamma$  ensure the convergence to zero as $n\to\infty$ of  the  bound given  in \eqref{pr:upper:gen:coro1:e1}, which proves the result.\end{proof}

\paragraph{Proof of the lower bound.}
\begin{proof}[\noindent\textcolor{darkred}{\sc Proof of Theorem \ref{gen:lower:theo}.}]  Let $X_i$, $i\in\N$, be i.i.d. copies of $X$ with associated covariance operator $\Gamma$ belonging to $\cN_\Gw^\Gd$. Then for each $j$, $[X_i]_j$ is centered and has variance $\Ex [X]_j^2=\skalarV{\Gamma\psi_j,\psi_j}\leqslant \Gw_j\Gd $. This result will be used below without further reference.  Consider independent error terms $\epsilon_i\sim \cN(0,1)$, $i\in\N$, which are  independent of the random functions $\{X_i\}$.  Let  $\theta\in\{-1,1\}^{\kstar}$, where $\kstar:=\kstar(n)\in\N$ satisfies \eqref{gen:def:m-gam} for some $\triangle\geqslant 1$. Define a  $\kstar$-vector $u$ of coefficients $u_{j}$  satisfying \eqref{app:lower:lem1:u} in Lemma \ref{app:lower:lem1}. For each $\theta$ we consider a slope function $\beta^\theta:=\sum_{j=1}^ {\kstar}\theta_{j} u_{j}\psi_j\in\cW_p^\rho$ by using \eqref{app:lower:lem1:e1}  in Lemma \ref{app:lower:lem1}. Consequently, for each $\theta$ the random variables $(Y_i,X_i)$ with $Y_i:=\int_0^1\beta^{\theta}(s)X_i(s)ds+\sigma\epsilon_i$, $i=1,\dotsc,n$, form a sample of the model \eqref{intro:e1} and we denote its joint distribution by  $P_{\theta}$. Furthermore, for $j=1,\dotsc,\kstar$ and each $\theta$ we introduce $\theta^{(j)}$ by $\theta^{(j)}_{l}=\theta_{l}$ for $j\ne l$ and $\theta^{(j)}_{j}=-\theta_{j}$.  As in case of  $P_\theta$ the conditional distribution of $Y_i$ given $X_i$  is Gaussian with mean  $\sum_{j=1}^{\kstar} \theta_{j} u_{j} [X_i]_j$  and variance $\sigma^2$ it is easily seen that   the log-likelihood of $P_{\theta^{(j)}}$ w.r.t. $P_{\theta}$ is given by 
\begin{equation*}
\log\Bigl(\frac{dP_{\theta^{(j)}}}{dP_{\theta}}\Bigr)=-\frac{1}{\sigma^2}   \sum_{i=1}^n 
\Bigl\{Y_i - \sum_{l=1}^{\kstar} \theta_{l} {u}_{l}[X_i]_l \Bigr\} \theta_{j} {u}_{j}[X_i]_j - \frac{2}{\sigma^2} \sum_{i=1}^n {u}_{j}^2[X_i]_j^2
\end{equation*}
and its expectation w.r.t. $P_{\theta}$ satisfies 
$\Ex_{P_{\theta}}[\log(dP_{\theta^{(j)}}/dP_{\theta})]\geqslant -2n\Gd\, u^2_{j}\,  \Gw_j/ \sigma^2$. In terms of  Kullback-Leibler divergence this means $KL(P_{\theta^{(j)}},P_{\theta})\leqslant 2n\Gd\, u^2_{j}\,  \Gw_j/ \sigma^2$. Since the
 Hellinger distance $H(P_{\theta^{(j)}},P_{\theta})$ satisfies $H^2(P_{\theta^{(j)}},P_{\theta}) \leqslant KL(P_{\theta^{(j)}},P_{\theta})$  it follows from \eqref{app:lower:lem1:e1} in Lemma \ref{app:lower:lem1} that 
\begin{equation}\label{pr:lower:e3}
H^2(P_{\theta^{(j)}},P_{\theta}) \leqslant \frac{2 n \Gd }{\sigma^2}\cdot u^2_{j}\cdot \Gw_j\leqslant 1,\quad j=1,\dotsc,\kstar.
\end{equation} 
Consider  the  Hellinger affinity $\rho(P_{\theta^{(j)}},P_{\theta})= \int \sqrt{dP_{\theta^{(j)}}dP_{\theta}}$, then we obtain for any estimator $\widetilde\beta$ of $\beta$ that
\begin{align}\nonumber
\rho(P_{\theta^{(j)}},P_{\theta})&\leqslant \int \frac{|\skalarV{\widetilde{\beta}-\beta^{\theta^{(j)}},\psi_j}|}{|\skalarV{\beta^\theta-\beta^{\theta^{(j)}},\psi_j}|} \sqrt{dP_{\theta^{(j)}}dP_{\theta}} + \int \frac{|\skalarV{\widetilde{\beta}-\beta^\theta,\psi_j}|}{|\skalarV{\beta^\theta-\beta^{\theta^{(j)}},\psi_j}|} \sqrt{ dP_{\theta^{(j)}}dP_{\theta}}\\\label{pr:lower:e4}
&\leqslant \Bigl( \int  \frac{|\skalarV{\widetilde{\beta}_s-\beta^{\theta^{(j)}},\psi_j}|^2}{|\skalarV{\beta^\theta-\beta^{\theta^{(j)}},\psi_j}|^2} dP_{\theta^{(j)}}\Bigr)^{1/2} +   \Bigl( \int  \frac{|\skalarV{\widetilde{\beta}-\beta^\theta,\psi_j}|^2}{|\skalarV{\beta^\theta-\beta^{\theta^{(j)}},\psi_j}|^2} dP_{\theta}\Bigr)^{1/2}.
\end{align}
Due to the identity $\rho(P_{\theta^{(j)}},P_{\theta})=1-\frac{1}{2}H^2(P_{\theta^{(j)}},P_{\theta})$   combining  \eqref{pr:lower:e3} with 
 \eqref{pr:lower:e4} yields
\begin{equation*}
\Bigl\{\Ex_{{\theta^{(j)}}}|\skalarV{\widetilde{\beta}-\beta^{\theta^{(j)}},\psi_j}|^2+ \Ex_{{\theta}}|\skalarV{\widetilde{\beta}-\beta^\theta,\psi_j}|^2\Bigr\}\geqslant\frac{1}{2} u_j^2,\quad j=1,\dotsc, \kstar.
 \end{equation*}
From this  we conclude for each estimator $\widetilde\beta$ that
\begin{align*}
\sup_{\beta \in \cW_\bw^\rho} &\Ex\normV{\widetilde\beta -\beta}^2_\hw \geqslant \sup_{\theta\in \{-1,1\}^{\kstar}} \Ex_\theta\normV{\widetilde\beta -\beta^\theta}_\hw^2\\
&\geqslant \frac{1}{2^{{\kstar}}}\sum_{\theta\in \{-1,1\}^{\kstar}}\sum_{j=1}^{\kstar}\hw_j\Ex_{{\theta}}|\skalarV{\widetilde{\beta}-\beta^{\theta},\psi_j}|^2\\
&= \frac{1}{2^{{\kstar}}}\sum_{\theta\in \{-1,1\}^{\kstar}}\sum_{j=1}^{\kstar}\hw_j\frac{1}{2}\Bigl\{\Ex_{{\theta}}|\skalarV{\widetilde{\beta}-\beta^\theta,\psi_j}|^2+\Ex_{{\theta^{(j)}}}|\skalarV{\widetilde{\beta}-\beta^{\theta^{(j)}},\psi_j}|^2 \Bigr\}\\
&\geqslant\frac{1}{4} \sum_{j=1}^{\kstar}  u_j^2\cdot \hw_j \geqslant \frac{1}{4}\cdot \min\Bigl\{\frac{\sigma^2}{2d}, \frac{\br}{\triangle}\Bigr\} \cdot \frac{\dstar}{\triangle},
\end{align*}
where the last inequality follows from  \eqref{app:lower:lem1:e1} in Lemma \ref{app:lower:lem1}  which completes the proof.\end{proof}

\paragraph{Proof of the upper bound.}
\begin{proof}[\noindent\textcolor{darkred}{\sc Proof of Theorem \ref{gen:upper:theo}.}] Our proof starts with the observation that the link condition $\Gamma\in\cN^\Gd_\Gw$ implies $2\normV{[\Gamma]^{-1}_{\um}}\leqslant 8\Gd^3/\Gw_m$, $\normV{[\Diag(\hw)]^{1/2}_{\um}\,[{\Gamma}]_{\um}^{-1/2}}^2\leqslant 4\Gd^3 \sup_{1\leqslant j\leqslant m} \{\hw_j/\Gw_j\} $ and 
$\normV{[{\Gamma}]_{\um}}^2\leqslant \Gd^2 $ by using the estimates \eqref{pr:upper:l3:e1-1}, \eqref{pr:upper:l3:e1-2}  and \eqref{pr:upper:l3:e1-3}  in Lemma \ref{pr:upper:l3}, respectively. Moreover, for all $X\in \cX^{4k}_\eta$  by applying Markov's inequality    \eqref{pr:upper:l2:e3}   in Lemma \ref{pr:upper:l2} we have  $P(\Omega_{1/2}^c)\leqslant C \eta m^{2k}/n^k$ for some $C>0$. Furthermore, by using   the definition of $\kstar$
the condition $m=\kstar$ implies $1/\Gw_{\kstar}\leqslant n\triangle/\bw_{\kstar}$ and hence  $\gamma = n\max(1, 8\, \Gd^3\,\triangle/  b_{\kstar})\geqslant 2\normV{[\Gamma]^{-1}_{\ukstar}}$. Therefore, from  \eqref{pr:upper:gen:prop1:e2} and \eqref{pr:upper:gen:prop1:e1} in the proof of Proposition \ref{gen:prop1} follows \begin{multline*}
\Ex\normV{\widehat{\beta}- \beta}^2_{\hw}
\leqslant C\,\Bigl\{ \normV{\beta^{\kstar}-\beta}^2_\hw+ \normV{\beta^{\kstar}}^2_\hw\,(\kstar^{2k}/n^k)\,\eta +  \Gd^3 \sup_{1\leqslant j\leqslant {\kstar}}\{ \hw_j/\Gw_j\} \, ({\kstar}/n)\\\, \eta \, \{\sigma^{2}+  \normV{\beta-\beta^{\kstar}}^{2}\, \Ex\normV{X}^2 \} 
\,\{ 1 +   \kstar^{2+k}/(n^{k/2-1}) \,  \Gd^8\,\triangle^2  \} \Bigr\}\end{multline*}
for some $C>0$. Consequently, the definition of $\dstar$ by using \eqref{pr:upper:l3:e2} in Lemma \ref{pr:upper:l3}, i.e.,  $\normV{\beta-\beta^{\kstar}}^{2}_\hw\leqslant 10 \Gd^4 \,\br \dstar,$ and $\Ex\normV{X}^2\leqslant \Gd \Lw$,  implies
\begin{multline*}
\Ex\normV{\widehat{\beta}- \beta}^2_{\hw}
\leqslant C\,\dstar\, \eta \, \Gd^{16}\,\triangle^2 \{\sigma^{2}+  \rho \Lw \}\,\\ \Bigl\{ 1+ \kstar^{2k}/(\dstar n^k)+   {\kstar}/(\dstar n)\,  \sup_{1\leqslant j\leqslant {\kstar}}\{ \hw_j/\Gw_j\} \,\Bigl\}\,\Bigl\{ 1 +   \kstar^{2+k}/(n^{k/2-1})   \Bigr\}\end{multline*}
Thereby, the result follows from  the condition \eqref{gen:upper:varphi:cond} which ensures that the factors in braces are  bounded as $n\to\infty$, which completes the proof. \end{proof}

\paragraph{Technical assertions.}\hfill\\[1ex]
The following two lemma gather technical results used in the proof of Proposition \ref{gen:prop1}, Theorem \ref{gen:lower:theo} and Theorem \ref{gen:upper:theo}.
\begin{lem}\label{pr:upper:l2}
Suppose $X\in\cX_{\eta}^{4k}$ and $\epsilon\in\cE^{4k}_\eta$, $k\in\N$. Then for some constant $C>0$ only depending on $k$ we have
\begin{gather}\label{pr:upper:l2:e1}
\Ex\normV{[\Gamma]_{\um}^{-1/2} W_{n,m}}^{2k}\leq C\cdot \frac{m^{k}}{n^k}\cdot \sigma^{2k}\cdot\eta,\\ \label{pr:upper:l2:e2}
\Ex\normV{[\Gamma]_{\um}^{-1/2} T_{n,m}}^{2k}\leq C\cdot \frac{m^{k}}{n^k}\cdot \normV{\beta-\beta^m}^{2k}\cdot (\Ex\normV{X}^2)^k\cdot\eta,\\ \label{pr:upper:l2:e3}
\Ex\normV{ \Xi_{n,m}}^{2k} \leq C\cdot\eta\cdot \frac{m^{2k}}{n^k},\\ \label{pr:upper:l2:e4}
\Ex\normV{\{[\Gamma]_{\um} - [\widehat{\Gamma}]_{\um}\}[\Gamma]_{\um}^{-1/2}}^{2k} \leq C\cdot\eta\cdot \frac{m^{2k}}{n^k}\cdot (\Ex\normV{X}^2)^k
\end{gather}
\end{lem}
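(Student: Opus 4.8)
The four bounds share one structure: in each the quantity inside the norm is $n^{-1}\sum_{i=1}^n$ of i.i.d. \emph{centred} random vectors (for \eqref{pr:upper:l2:e1} and \eqref{pr:upper:l2:e2}) or matrices (for \eqref{pr:upper:l2:e3} and \eqref{pr:upper:l2:e4}). For \eqref{pr:upper:l2:e1} the summand is $\sigma\epsilon_i[\Gamma]_{\um}^{-1/2}[X_i]_{\um}=\sigma\epsilon_i[\tilde X_i]_{\um}$, centred because $\Ex[\epsilon X(t)]=0$ forces $\Ex[\epsilon_i[X_i]_j]=0$ (only uncorrelatedness is needed); for \eqref{pr:upper:l2:e2} it is $\skalarV{X_i,\beta-\beta^m}[\tilde X_i]_{\um}$, centred by the Galerkin identity $\Ex[T_n]_{\um}=[\Gamma(\beta-\beta^m)]_{\um}=0$; for \eqref{pr:upper:l2:e3} it is $[\tilde X_i]_{\um}[\tilde X_i]_{\um}^t-I_m$, centred since $[\tilde X_i]_{\um}$ has covariance $I_m$; and for \eqref{pr:upper:l2:e4}, writing $([\Gamma]_{\um}-[\widehat\Gamma]_{\um})[\Gamma]_{\um}^{-1/2}=-n^{-1}\sum_i([X_i]_{\um}[\tilde X_i]_{\um}^t-[\Gamma]_{\um}^{1/2})$, the summand is $[X_i]_{\um}[\tilde X_i]_{\um}^t-[\Gamma]_{\um}^{1/2}$, again centred. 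The plan is to pass from the $2k$-th power of the Euclidean norm (respectively, the Frobenius norm $\normV{K}_F=(\sum_{j,l}K_{j,l}^2)^{1/2}\geqslant\normV{K}$) to a sum of scalar $2k$-th moments via the convexity bound $(\sum_{i=1}^N a_i)^k\leqslant N^{k-1}\sum_{i=1}^N a_i^k$ ($a_i\geqslant0$), with $N=m$ coordinates in \eqref{pr:upper:l2:e1}, \eqref{pr:upper:l2:e2} and $N=m^2$ entries in \eqref{pr:upper:l2:e3}, \eqref{pr:upper:l2:e4}, and then to apply Rosenthal's inequality to each resulting scalar i.i.d. centred sum.

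Concretely, writing a single scalar summand as $\xi_i$, Rosenthal's inequality gives $\Ex|n^{-1}\sum_i\xi_i|^{2k}\leqslant C\,n^{-2k}\{(n\,\Ex\xi_1^2)^k+n\,\Ex|\xi_1|^{2k}\}$, so everything reduces to bounding $\Ex\xi_1^2$ and $\Ex|\xi_1|^{2k}$ uniformly in the coordinate. These are handled by Cauchy--Schwarz together with the moment hypotheses: for a product such as $\epsilon(\tilde X)_j$ or $(\tilde X)_j(\tilde X)_l$ one splits the $2k$-th moment into the geometric mean of two $4k$-th moments, each bounded through $X\in\cX^{4k}_\eta$ and $\epsilon\in\cE^{4k}_\eta$ --- this is exactly why $4k$ moments are assumed. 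Since $\eta\geqslant1$, Lyapunov's inequality collapses all resulting $\eta$-exponents to one, so that for \eqref{pr:upper:l2:e1} one gets $\Ex\xi_1^2\leqslant C\eta\,\sigma^2$ and $\Ex|\xi_1|^{2k}\leqslant C\eta\,\sigma^{2k}$ uniformly in $j$, whence the per-coordinate bound $C\eta\sigma^{2k}n^{-k}$; summing $m$ of them with the factor $m^{k-1}$ yields $\eqref{pr:upper:l2:e1}$. The estimate \eqref{pr:upper:l2:e3} is identical with $m^2$ entries, $\Ex|(\tilde X)_j(\tilde X)_l-\delta_{j,l}|^{2k}\leqslant C\eta$, producing the factor $m^{2k}$.

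The two bounds \eqref{pr:upper:l2:e2} and \eqref{pr:upper:l2:e4} need one extra device, because the naive Cauchy--Schwarz estimate of $\Ex\xi_1^2$ and $\Ex|\xi_1|^{2k}$ produces the \emph{full-norm} moments $\Ex\normV{X}^{2k}$ or $\Ex\normV{X}^{4k}$, which the coordinatewise definition of $\cX^{4k}_\eta$ does not control, whereas the target carries only $(\Ex\normV{X}^2)^k$. The resolution is Minkowski's inequality in $L^{kr}$ ($r\in\{1,2\}$) applied to $\normV{X}^2=\sum_l[X]_l^2$: since $\Ex|[X]_l|^{2kr}=[\Gamma]_{l,l}^{kr}\,\Ex|[X]_l/[\Gamma]_{l,l}^{1/2}|^{2kr}\leqslant[\Gamma]_{l,l}^{kr}\,\eta$, one obtains $\normV{\normV{X}^2}_{L^{kr}}\leqslant\sum_l(\Ex|[X]_l|^{2kr})^{1/(kr)}\leqslant\eta^{1/(kr)}\sum_l[\Gamma]_{l,l}=\eta^{1/(kr)}\Ex\normV{X}^2$, hence $\Ex\normV{X}^{2kr}\leqslant\eta\,(\Ex\normV{X}^2)^{kr}$; here I also use the identity $\sum_l[\Gamma]_{l,l}=\Ex\normV{X}^2$. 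Combining this with $\skalarV{X,\beta-\beta^m}^2\leqslant\normV{X}^2\normV{\beta-\beta^m}^2$ and Cauchy--Schwarz gives, uniformly in $j$, $\Ex\xi_1^2\leqslant C\eta\,\normV{\beta-\beta^m}^2\,\Ex\normV{X}^2$ and $\Ex|\xi_1|^{2k}\leqslant C\eta\,\normV{\beta-\beta^m}^{2k}(\Ex\normV{X}^2)^k$, which after Rosenthal and summation yield \eqref{pr:upper:l2:e2}. For \eqref{pr:upper:l2:e4} the same estimates applied entrywise, using in addition $\sum_l([\Gamma]_{\um}^{1/2})_{j,l}^2=[\Gamma]_{j,j}$ and $\sum_{j}[\Gamma]_{j,j}^k\leqslant(\max_j[\Gamma]_{j,j})^{k-1}\sum_j[\Gamma]_{j,j}\leqslant(\Ex\normV{X}^2)^k$, give the bound with $m^{2k-1}\leqslant m^{2k}$.

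In every case the variance term $(n\,\Ex\xi_1^2)^k$ dominates, scaling like $n^{-k}$ after division by $n^{2k}$, while the single-summand term scales like $n^{1-2k}\leqslant n^{-k}$ for $k\geqslant1$ and is absorbed into $C$; this fixes the rate $n^{-k}$ and the powers $m^k$ (vector case) and $m^{2k}$ (matrix case). The only genuine obstacle is precisely the passage from coordinatewise moment control to full-norm moments in \eqref{pr:upper:l2:e2} and \eqref{pr:upper:l2:e4}; once the Minkowski argument above replaces $\Ex\normV{X}^{2kr}$ by $(\Ex\normV{X}^2)^{kr}$ at the cost of a single harmless factor $\eta$, the remaining steps are routine applications of Rosenthal's and H\"older's inequalities.
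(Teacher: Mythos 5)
Your proof is correct and follows essentially the same route as the paper's: a coordinatewise (entrywise) reduction via the power-mean and Frobenius-norm inequalities, a Rosenthal/Petrov moment bound for the i.i.d.\ centred sums, Cauchy--Schwarz with the $4k$-th moment hypotheses, and the replacement of $\Ex\normV{X}^{4k}$ by $\eta\,(\Ex\normV{X}^2)^{2k}$ (which the paper obtains by a multinomial expansion plus generalized H\"older rather than your Minkowski argument). The only place you do noticeably more work is \eqref{pr:upper:l2:e4}, where the paper simply factors $\{[\Gamma]_{\um}-[\widehat{\Gamma}]_{\um}\}[\Gamma]_{\um}^{-1/2}=-[\Gamma]_{\um}^{1/2}[\Xi_{n}]_{\um}$ and uses $\normV{[\Gamma]_{\um}^{1/2}}^2\leqslant\Ex\normV{X}^2$ to reduce it directly to \eqref{pr:upper:l2:e3}.
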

\begin{proof}[\textcolor{darkred}{\sc Proof.}]Let $\tilde W := [\Gamma]_{\um}^{-1/2} W_{n,m}$, then $\Ex\normV{[\Gamma]_{\um}^{-1/2} W_{n,m}}^{2k}\leq m^{k-1} \sum_{j=1}^m\Ex \tilde W^{2k}_j$, where $\tilde W_j = (1/n)\sum_{i=1}^n \sigma\epsilon_i[\tilde X_i]_{j}$. The random variables $(\epsilon_i[\tilde X_i]_{j})$,  $1\leq i\leq n,$ are independent and identically distributed (i.i.d.) with mean zero. Since $X\in\cX_{\eta}^{4k}$ and $\epsilon\in\cE^{4k}_\eta$, \eqref{pr:upper:l2:e1}  follows  from Theorem 2.10 in \cite{Petrov1995}, that is, $\Ex \tilde W_{j}^{2k} \leq C n^{-k}\sigma^{2k} \Ex |\epsilon[\tilde X]_{j}|^{2k}\leq C n^{-k}\sigma^{2k}\eta$ for some constant $C>0$ only depending on $k$.

Proof of \eqref{pr:upper:l2:e2}. Due to  $ \Ex \skalarV{\beta-\beta_m,X}[X]_{\um}= [\Gamma(\beta - \beta^m)]_{\um}=0$, i.e., the random variables $(\skalarV{\beta-\beta^m,X_i}[X_i]_{\um})$,  $1\leqslant i\leqslant n,$ are i.i.d. with mean zero. Furthermore, we claim that  $X\in\cX_{\eta}^{4k}$  implies $\Ex |\skalarV{\beta-\beta^m,X}  [\tilde X]_{j}|^{2k} \leq C\cdot\eta\cdot \normV{\beta-\beta^m}^{2k} (\Ex\normV{X}^2)^k$,  for each $j\in\N$. Then the estimate \eqref{pr:upper:l2:e2} follows  in analogy to \eqref{pr:upper:l2:e1}. Indeed, we have  $\Ex|[\tilde X]_{j}|^{4k}\leqslant \eta$ and 
\begin{align*}
\Ex|\skalarV{\beta-\beta^m,X}|^{4k} &\leqslant \normV{\beta-\beta^m}^{4k} \sum_{j_1}[\Gamma]_{j_1,j_1}\dots\sum_{j_{2k}}[\Gamma]_{j_{2k},j_{2k}} \Ex\prod_{l=1}^{2k} |[X]_{j_l}/[\Gamma]_{j_{l},j_{l}}^{1/2}|^2\\&\leqslant \normV{\beta-\beta^m}^{4k} \,  (\Ex\normV{X}^2)^{2k}\, \eta,
\end{align*}
which  imply together the assertion by using the Cauchy-Schwarz inequality.

Proof of \eqref{pr:upper:l2:e3}. From the identity $(\Xi_{n,m})_{j,l}= (1/n)\sum_{i=1}^n \{[\tilde X_i]_{j}[\tilde X_i]_{l}-\delta_{jl}\}$ with $\delta_{jl}= 1$ if $j=l$ and zero otherwise, we conclude $\Ex (\Xi_{n,m})_{j,l}^{2k}\leq C' n^{-k} \Ex |[\tilde X]_{j}[\tilde X]_{l}-\delta_{jl}|^{2k}$.  Thus $X\in\cX_{\eta}^{4k}$ implies  $\Ex\normV{\Xi_{n,m}}^{2k}\leq m^{2(k-1)} \sum_{j,l} \Ex (\Xi_{n,m})_{j,l}^{2k}\leq C m^{2k} n^{-k}\eta $.

The estimate \eqref{pr:upper:l2:e4} follows by using  the identity  $\{[\Gamma]_{\um} - [\widehat{\Gamma}]_{\um}\}[\Gamma]_{\um}^{-1/2}=[\Gamma]_{\um}^{1/2}\Xi_{n,m}$ from \eqref{pr:upper:l2:e3},  which completes the proof.\end{proof}

\begin{lem}\label{app:lower:lem1} Let $\kstar\in\N$ and $\dstar$ be chosen such that  \eqref{gen:def:m-gam} is satisfied for some $\triangle\geqslant 1$.  Consider a (infinite) vector $u$ with components $u_j$ satisfying  \begin{equation}\label{app:lower:lem1:u}u_j^2= \frac{\zeta}{n\cdot \Gw_j},\quad j\in\N, \quad \text{ with }\quad \zeta:=\min \left\{ \sigma^2/(2\Gd),  \br/\triangle\right\},\end{equation}
then under Assumption \ref{ass:reg} we have for all $j\in\N$
\begin{gather}\label{app:lower:lem1:e1}
\frac{2n \Gd }{\sigma^2}\,u^2_j \,\Gw_j\leqslant 1,\;
\sum_{j=1}^{\kstar}\,u^2_j \,  \bw_j\leqslant \br,\; \mbox{and}\;
 \sum_{j=1}^{\kstar}\,u^2_j\,  \hw_{j}\geqslant \min \left\{ \frac{\sigma^2}{2\Gd},  \frac{\br}{\triangle}\right\} \,\frac{\dstar}{\triangle}.
\end{gather}
\end{lem}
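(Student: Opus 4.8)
The plan is to verify each of the three assertions in \eqref{app:lower:lem1:e1} separately, in each case inserting the explicit form $u_j^2 = \zeta/(n\Gw_j)$ from \eqref{app:lower:lem1:u} and then reading off the conclusion from the defining relations \eqref{gen:def:m-gam} together with the definition $\zeta := \min\{\sigma^2/(2\Gd),\br/\triangle\}$. The first bound is immediate and holds uniformly in $j$: substituting $u_j^2$, the factors $n$ and $\Gw_j$ cancel, leaving $\tfrac{2\Gd}{\sigma^2}\,\zeta$, which is at most $1$ because $\zeta\leqslant\sigma^2/(2\Gd)$.

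For the third (lower) bound I would write $\sum_{j=1}^{\kstar} u_j^2\,\hw_j = \tfrac{\zeta}{n}\sum_{j=1}^{\kstar}\hw_j/\Gw_j$ and invoke the left-hand inequality of \eqref{gen:def:m-gam}, which gives $\sum_{j=1}^{\kstar}\hw_j/\Gw_j\geqslant \tfrac{n}{\triangle}\cdot\tfrac{\hw_{\kstar}}{\bw_{\kstar}} = \tfrac{n}{\triangle}\,\dstar$ since $\dstar=\hw_{\kstar}/\bw_{\kstar}$. Hence $\sum_{j=1}^{\kstar} u_j^2\,\hw_j\geqslant \zeta\,\dstar/\triangle = \min\{\sigma^2/(2\Gd),\br/\triangle\}\,\dstar/\triangle$, which is exactly the claimed estimate.

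The second (upper) bound is the only place Assumption \ref{ass:reg} is genuinely needed. After substitution it reads $\sum_{j=1}^{\kstar} u_j^2\,\bw_j = \tfrac{\zeta}{n}\sum_{j=1}^{\kstar}\bw_j/\Gw_j$, and the subtlety is that \eqref{gen:def:m-gam} controls the $\hw$-weighted partial sum $\sum_{j=1}^{\kstar}\hw_j/\Gw_j$ rather than the $\bw$-weighted one. The key step is to use that $(\bw_j/\hw_j)_{j\geqslant1}$ is non-decreasing under Assumption \ref{ass:reg}, so that $\bw_j\leqslant \hw_j\,\bw_{\kstar}/\hw_{\kstar}$ for every $j\leqslant\kstar$; this converts the sum into $\tfrac{\bw_{\kstar}}{\hw_{\kstar}}\sum_{j=1}^{\kstar}\hw_j/\Gw_j$, which the right-hand inequality of \eqref{gen:def:m-gam} bounds above by $n\triangle$. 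Therefore $\sum_{j=1}^{\kstar} u_j^2\,\bw_j\leqslant \zeta\,\triangle\leqslant \br$, the last step using $\zeta\leqslant\br/\triangle$.

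I do not expect a real obstacle here: this is a bookkeeping lemma whose only genuine idea is recognising that the monotonicity of $(\bw_j/\hw_j)$ is precisely what allows one to pass from the $\bw$-weighted partial sum back to the $\hw$-weighted partial sum that appears in \eqref{gen:def:m-gam}, while the choice $\zeta=\min\{\sigma^2/(2\Gd),\br/\triangle\}$ is calibrated so that the first bound exploits the $\sigma^2/(2\Gd)$ term and the second exploits the $\br/\triangle$ term.
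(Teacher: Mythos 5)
Your proposal is correct and follows essentially the same route as the paper: the first bound by cancellation and the choice of $\zeta$, the second by using the monotonicity of $(\bw_j/\hw_j)$ to reduce to the $\hw$-weighted sum controlled from above by \eqref{gen:def:m-gam}, and the third directly from the lower inequality in \eqref{gen:def:m-gam} together with $\dstar=\hw_{\kstar}/\bw_{\kstar}$. No gaps.
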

\begin{proof}[\textcolor{darkred}{\sc Proof.}] The first inequality in \eqref{app:lower:lem1:e1}  follows trivially by using the definition of $\zeta$, while   the definition of $\kstar$ given in \eqref{gen:def:m-gam} together with Assumption \ref{ass:reg}, i.e., $(\bw_{j}/\hw_{j})$ is non-decreasing,
implies the second, i.e.,  $\sum_{j=1}^{\kstar}\,u^2_{j}\, \bw_j \leqslant 
\zeta\, \bw_{\kstar}/\hw_{\kstar}\,\sum_{j=1}^{\kstar}\hw_j/(n \Gw_j) \leqslant\zeta\,\triangle\leqslant \br$. To deduce the third estimate  from the definition of $\kstar$ and $\dstar$  observe that  $ \sum_{j=1}^{\kstar}u^2_j \hw_j =  \dstar \,\zeta \, \bw_{\kstar}/\hw_{\kstar}\ \sum_{j=1}^{\kstar}\hw_j/{(n\, \Gw_j)} \geqslant \dstar \,\zeta/\triangle$, which proves the lemma.\end{proof}

\begin{lem}\label{pr:upper:l3}Suppose the sequences $\bw$, $\hw$ and $\Gw$ satisfy Assumption \ref{ass:reg}. Let $\Gamma\in\cN_{\Gw}^\Gd$. Then 
\begin{gather}\label{pr:upper:l3:e1-1}
\sup_{m\in\N}\Bigl\{ \Gw_m \normV{[\Gamma]_{\um}^{-1/2}}^2\Bigr\}\leqslant \{2\Gd^2(2\Gd^4+3)\}^{1/2}\leqslant 4d^3,\\
\label{pr:upper:l3:e1-2}
\sup_{m\in\N}\Bigl\{  \normV{ [\Diag(\Gw)]^{1/2}_{\um} [\Gamma]_{\um}^{-1/2}}^2\Bigr\}\leqslant \{2\Gd^2(2\Gd^4+3)\}^{1/2}\leqslant 4d^3,\\
\label{pr:upper:l3:e1-3}
\sup_{m\in\N}\Bigl\{  \normV{ [\Diag(\Gw)]^{-1/2}_{\um} [\Gamma]_{\um}^{1/2}}^2\Bigr\}\leqslant \Gd.
\end{gather}
If  in addition $\beta^m$  denotes a Galerkin solution of $g=\Gamma\beta$ with $\beta\in \cW_\bw^\br$, then
\begin{gather} \label{pr:upper:l3:e2}
\sup_{m\in\N}\Bigl\{\bw_m/\hw_m\,\normV{\beta-\beta^m}_\hw^2\Bigr\}\leqslant 2(2d^4+3) \, \br\leqslant 10 d^4.
\end{gather}
\end{lem}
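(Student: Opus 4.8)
The plan is to convert the link condition \eqref{bm:link} into a two-sided operator inequality and then read off the four estimates by compressing to $\Psi_m$. Writing $\Diag(\Gw)$ for the operator with $\Diag(\Gw)\psi_j=\Gw_j\psi_j$, the definition \eqref{bm:link} is exactly the operator ordering $\Gd^{-2}\Diag(\Gw)^2\leqslant\Gamma^2\leqslant\Gd^2\Diag(\Gw)^2$, since $\normV{\Gamma f}^2=\skalarV{\Gamma^2f,f}$ and $\normV{f}_{\Gw^2}^2=\skalarV{\Diag(\Gw)^2f,f}$. Applying the operator monotonicity of the square root -- this is where the inequality of \cite{Heinz51} enters -- yields $\Gd^{-1}\Diag(\Gw)\leqslant\Gamma\leqslant\Gd\Diag(\Gw)$. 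Compressing this to $\Psi_m$ gives $\Gd^{-1}[\Diag(\Gw)]_{\um}\leqslant[\Gamma]_{\um}\leqslant\Gd[\Diag(\Gw)]_{\um}$, because for $f\in\Psi_m$ one has $\skalarV{\Gamma f,f}=\skalarV{[\Gamma]_{\um}[f]_{\um},[f]_{\um}}$ and $\skalarV{\Diag(\Gw)f,f}=\skalarV{[\Diag(\Gw)]_{\um}[f]_{\um},[f]_{\um}}$.

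From this compressed inequality the first three assertions follow by routine manipulations. For \eqref{pr:upper:l3:e1-3} I would write $\normV{[\Diag(\Gw)]^{-1/2}_{\um}[\Gamma]^{1/2}_{\um}}^2=\normV{[\Diag(\Gw)]^{-1/2}_{\um}[\Gamma]_{\um}[\Diag(\Gw)]^{-1/2}_{\um}}$ and insert the upper bound $[\Gamma]_{\um}\leqslant\Gd[\Diag(\Gw)]_{\um}$; for \eqref{pr:upper:l3:e1-2} the same trick with $[\Gamma]^{-1}_{\um}\leqslant\Gd[\Diag(\Gw)]^{-1}_{\um}$, and for \eqref{pr:upper:l3:e1-1} the bound $\normV{[\Gamma]^{-1}_{\um}}\leqslant\Gd\normV{[\Diag(\Gw)]^{-1}_{\um}}=\Gd/\Gw_m$ (as $\Gw$ is non-increasing). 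This clean route in fact delivers the constant $\Gd$ in all three lines; the larger stated constant $\{2\Gd^2(2\Gd^4+3)\}^{1/2}$ is what one is forced to accept when one argues only at the matrix level without operator monotonicity. There the natural object is $[\Gamma^2]_{\um}$, and the elementary identity $[\Gamma^2]_{\um}=[\Gamma]^2_{\um}+[\Pi_m\Gamma\Pi_m^\perp\Gamma\Pi_m]_{\um}$ exhibits a non-negative coupling term $[\Pi_m\Gamma\Pi_m^\perp\Gamma\Pi_m]_{\um}=(\Pi_m^\perp\Gamma\Pi_m)^t(\Pi_m^\perp\Gamma\Pi_m)\geqslant0$. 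This correction has the favourable sign for the upper bound (so \eqref{pr:upper:l3:e1-3} stays sharp) but the wrong sign for the lower bound on $[\Gamma]_{\um}$, and bounding it relative to $[\Diag(\Gw)]^2_{\um}$ (via $\normV{\Gamma\,\Diag(\Gw)^{-1}}\leqslant\Gd$) is precisely what produces the polynomial-in-$\Gd$ constant.

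For the Galerkin estimate \eqref{pr:upper:l3:e2} I would split the error orthogonally in the $\cW_\hw$-inner product as $\beta-\beta^m=\Pi_m^\perp\beta+(\Pi_m\beta-\beta^m)$, so that $\normV{\beta-\beta^m}_\hw^2=\normV{\Pi_m^\perp\beta}_\hw^2+\normV{\Pi_m\beta-\beta^m}_\hw^2$. The truncation term is immediate: since $(\bw_j/\hw_j)$ is non-decreasing by Assumption \ref{ass:reg} and $\beta\in\cW_\bw^\br$, one has $\normV{\Pi_m^\perp\beta}_\hw^2=\sum_{j>m}(\hw_j/\bw_j)\,\bw_j|\skalarV{\beta,\psi_j}|^2\leqslant(\hw_m/\bw_m)\br$, hence $(\bw_m/\hw_m)\normV{\Pi_m^\perp\beta}_\hw^2\leqslant\br$. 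For the inner term I would use the identity $\Pi_m\beta-\beta^m=-\Gamma_m^{-1}\Pi_m\Gamma\Pi_m^\perp\beta$ from the proof of Proposition \ref{gen:prop1}; since this operator annihilates $\Psi_m$, it acts only on $\Pi_m^\perp\beta$, and therefore $\normV{\Pi_m\beta-\beta^m}_\hw\leqslant\normV{\Gamma_m^{-1}\Pi_m\Gamma\Pi_m^\perp}_\hw\,\normV{\Pi_m^\perp\beta}_\hw$. Combining this with the truncation bound reduces \eqref{pr:upper:l3:e2} to the uniform estimate $\sup_m\normV{\Gamma_m^{-1}\Pi_m\Gamma\Pi_m^\perp}_\hw<\infty$ (the intermediate estimate that the link condition is designed to guarantee), giving $(\bw_m/\hw_m)\normV{\beta-\beta^m}_\hw^2\leqslant(1+\sup_m\normV{\Gamma_m^{-1}\Pi_m\Gamma\Pi_m^\perp}_\hw^2)\,\br$.

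I expect the uniform bound on $\normV{\Gamma_m^{-1}\Pi_m\Gamma\Pi_m^\perp}_\hw$ to be the main obstacle. The difficulty is that the weighting $[\Diag(\hw)]_{\um}$ does not commute with $[\Gamma]^{-1}_{\um}$, so a naive submultiplicative split of $[\Diag(\hw)]^{1/2}_{\um}[\Gamma]^{-1}_{\um}[\Gamma\Pi_m^\perp\beta]_{\um}$ is far too lossy: it ignores that the coupling $\Pi_m\Gamma\Pi_m^\perp$ is effectively concentrated near the cut-off $m$ and spuriously produces powers of $\Gw_m^{-1}$ that do not cancel. The ingredients that must be combined to avoid this are the norm equivalences \eqref{pr:upper:l3:e1-1}--\eqref{pr:upper:l3:e1-3}, the Galerkin quasi-optimality $\normV{\beta-\beta^m}_\Gamma\leqslant\normV{\Pi_m^\perp\beta}_\Gamma$ (which follows from $\Pi_m\Gamma(\beta-\beta^m)=0$ and $\skalarV{\Gamma\,\cdot\,,\cdot}\asymp_\Gd\normV{\cdot}_\Gw^2$), and the precise content of Assumption \ref{ass:reg} that $(\Gw_j^2/\hw_j)$ is non-increasing, which matches the weight generated by the link condition and gives the operator domination $[\Diag(\hw)]_{\um}\leqslant(\hw_m/\Gw_m^2)[\Diag(\Gw)]^2_{\um}$ on $\Psi_m$. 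Carrying out the bookkeeping so that all powers of $\Gw_m$ cancel is the delicate step, and it is this balancing -- rather than the elementary truncation estimate -- that the constant $2(2\Gd^4+3)$ records.
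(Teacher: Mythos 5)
Your treatment of \eqref{pr:upper:l3:e1-1}--\eqref{pr:upper:l3:e1-3} is correct and in fact cleaner than the paper's: reading the link condition as the quadratic-form ordering $\Gd^{-2}\Diag(\Gw)^2\leqslant\Gamma^2\leqslant\Gd^2\Diag(\Gw)^2$, applying L\"owner--Heinz to get $\Gd^{-1}\Diag(\Gw)\leqslant\Gamma\leqslant\Gd\Diag(\Gw)$, and then compressing to $\Psi_m$ (which preserves the ordering and turns $\Diag(\Gw)$ into the diagonal matrix $[\Diag(\Gw)]_{\um}$) yields all three bounds with the sharper constant $\Gd$, which dominates the stated ones since $\Gd\geqslant 1$. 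The paper instead reaches these estimates indirectly, through a quantitative bound on the Galerkin solution $\normV{[\Gamma]_{\um}^{-1}[g]_{\um}}^2\leqslant 2\Gd^2(2\Gd^4+3)\normV{[\Diag(\Gw)]^{-1}_{\um}[g]_{\um}}^2$ followed by a Heinz step at the matrix level, which is where its polynomial-in-$\Gd$ constant comes from.

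The genuine gap is in \eqref{pr:upper:l3:e2}. You reduce it to $\sup_m\normV{\Gamma_m^{-1}\Pi_m\Gamma\Pi_m^\perp}_\hw<\infty$, correctly identify the three ingredients needed (quasi-optimality, the norm equivalences, and the monotonicity of $(\Gw_j^2/\hw_j)$), but then explicitly leave the ``delicate bookkeeping'' undone --- and that bookkeeping is the entire content of the estimate; nothing you have written actually produces a finite bound, let alone the constant $2(2\Gd^4+3)$. The way the paper closes it avoids the operator-norm formulation altogether and argues directly for the given $\beta$: Galerkin quasi-optimality sandwiched between Heinz inequalities gives
\begin{equation*}
\normV{\beta-\beta^m}^2_{\Gw^2}\leqslant \Gd^2\normV{\Gamma(\beta-\beta^m)}^2\leqslant \Gd^2\normV{\Gamma(\beta-\Pi_m\beta)}^2\leqslant \Gd^4\normV{\Pi_m^\perp\beta}^2_{\Gw^2}\leqslant \Gd^4\,\frac{\Gw_m^2}{\bw_m}\,\normV{\beta}^2_\bw,
\end{equation*}
the last step because $(\Gw_j^2/\bw_j)=(\Gw_j^2/\hw_j)(\hw_j/\bw_j)$ is non-increasing; the triangle inequality then bounds $\normV{\Pi_m\beta-\beta^m}^2_{\Gw^2}\leqslant 2(1+\Gd^4)(\Gw_m^2/\bw_m)\normV{\beta}^2_\bw$, and since $\Pi_m\beta-\beta^m\in\Psi_m$ and $(\hw_j/\Gw_j^2)$ is non-decreasing one converts with $\normV{f}^2_\hw\leqslant(\hw_m/\Gw_m^2)\normV{f}^2_{\Gw^2}$ for $f\in\Psi_m$. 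This is exactly the cancellation of the powers of $\Gw_m$ you anticipated, but it has to be exhibited, not conjectured; your own first three estimates are not even used here. Note also that your claimed orthogonal splitting $\normV{\beta-\beta^m}_\hw^2=\normV{\Pi_m^\perp\beta}_\hw^2+\normV{\Pi_m\beta-\beta^m}_\hw^2$ is fine (the two pieces live on disjoint index sets), but it only saves a factor of $2$ relative to the paper's crude split and does not address the missing estimate.
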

\begin{proof}[\noindent\textcolor{darkred}{\sc Proof.}] We start our proof with the observation that   the link condition $\Gamma\in\cN_{\Gw}^\Gd$ implies that $\Gamma$ is strictly positive and that for all $|s|\leqslant 1$  by using the inequality of \cite{Heinz51}
\begin{equation}\label{pr:upper:l3:Heinz}
\Gd^{-2|s|} \normV{f}_{\Gw^{2s}}^2 \leqslant \normV{\Gamma^sf}^2\leqslant \Gd^{2|s|} \normV{f}_{\Gw^{2s}}^2 .
\end{equation}
Consider $g\in\Psi_m$. Then \eqref{pr:upper:l3:Heinz} implies  $\beta:=\Gamma^{-1}g\in L^2[0,1]$  by using that  $\normV{g}_{\upsilon^{-2}}=\normV{[\Diag(\Gw)]^{-1}_{\um}[g]_{\um}}<\infty$. Furthermore, $\beta^m=[\Gamma]_{\um}^{-1}[g]_{\um}$ is the unique Galerkin solution of \eqref{app:unknown:Galerkin}. By using successively  the first inequality of \eqref{pr:upper:l3:Heinz},  the Galerkin condition \eqref{app:unknown:Galerkin} and  the second inequality of \eqref{pr:upper:l3:Heinz}, we obtain
\begin{equation}\label{pr:upper:l3:e1:0}
\normV{\beta-\beta^m}^2_{\Gw^2}\leqslant \Gd^2 \normV{\Gamma(\beta-\beta^m)}^2\leqslant \Gd^2 \normV{\Gamma(\beta-\Pi_m\beta)}^2\leqslant \Gd^4 \normV{\beta-\Pi_m\beta}^2_{\Gw^2}
\end{equation}
Since $(\Gw_j)$ is monotonically decreasing it follows $ \normV{\beta-\Pi_m\beta}^2_{\Gw^2} \leqslant \Gw_m^2\,\normV{\beta}^2$ and, hence
by using  \eqref{pr:upper:l3:Heinz} with $s=-1$ we have  
$ \normV{\beta-\Pi_m\beta}^2_{\Gw^2} \leqslant \Gd^2\,\Gw_m^2\,\normV{g}^2_{\Gw^{-2}}$. Combining the last estimate with \eqref{pr:upper:l3:e1:0} we obtain
\begin{equation*}
\normV{\beta^m-\Pi_m\beta}^2_{\Gw^2}\leqslant 2\{\normV{\beta-\beta^m}^2_{\Gw^2} +\normV{\beta-\Pi_m\beta}^2_{\Gw^2} \}\leqslant 2\Gd^2(\Gd^4+1)\,\Gw_m^2\,\normV{g}^2_{\Gw^{-2}}
\end{equation*}
which together with $\normV{f}^2\leqslant\Gw^{-2}_m \normV{f}^2_{\Gw^2} $ for all $f\in\Psi_m$ leads to
\begin{equation*}
\normV{\beta^m-\Pi_m\beta}^2\leqslant \Gw^{-2}_m\normV{\beta^m-\Pi_m\beta}^2_{\Gw^2} \leqslant 2\Gd^2(\Gd^4+1)\,\normV{g}^2_{\Gw^{-2}}.
\end{equation*}
By using the last estimate together with $\normV{g}_{\Gw^{-2}}=\normV{[\Diag(\Gw)]^{-1}_{\um}[g]_{\um}}$ we conclude that
\begin{multline}\label{pr:upper:l3:e1:1}
\normV{[\Gamma]_{\um}^{-1}[g]_{\um}}^2=  \normV{\beta^m}^2 \leqslant 2 \{\normV{\beta^m-\Pi_m\beta}^2 + \normV{\Pi_m\beta}^2\}\\\leqslant 2\Gd^2(2\Gd^4+3)\normV{[\Diag(\Gw)]^{-1}_{\um}[g]_{\um}}^2,\quad\forall g\in\Psi_m.
\end{multline}
Then, from \eqref{pr:upper:l3:e1:1} follows by using the inequality of \cite{Heinz51}  for all $g\in\Psi_m$
\begin{equation*}
\normV{[\Gamma]_{\um}^{-1/2}[g]_{\um}}^2\leqslant \{2\Gd^2(2\Gd^4+3)\}^{1/2} \normV{[\Diag(\Gw)]^{-1/2}_{\um}[g]_{\um}}^2,
\end{equation*}
which implies together with $\normV{[\Diag(\Gw)]^{-1}_{\um}}=\Gw_m^{-1}$ the estimate \eqref{pr:upper:l3:e1-1}, and furthermore   by replacing $[g]_{\um}$ by $[\Diag(\Gw)]^{1/2}_{\um}[g]_{\um}$ the estimate \eqref{pr:upper:l3:e1-2}, that is,
\begin{equation*}
\normV{[\Gamma]_{\um}^{-1/2}[\Diag(\Gw)]^{1/2}_{\um}[g]_{\um}}^2\leqslant \{2\Gd^2(2\Gd^4+3)\}^{1/2} \normV{[g]_{\um}}^2,\quad\forall g\in\Psi_m.
\end{equation*}

Proof of \eqref{pr:upper:l3:e1-3}.  By using the second inequality of \eqref{pr:upper:l3:Heinz} together with $\normV{\Pi_m}=1$ we obtain
\begin{equation*}
\normV{[\Gamma]_{\um}[g]_{\um}}^2= \normV{\Pi_m \Gamma g}^2\leqslant \normV{\Gamma g}^2\leqslant \Gd^2 \normV{g}^2_{\Gw^2}= \Gd^2\normV{[\Diag(\Gw)]_{\um}[g]_{\um}}^2,\quad\forall g\in\Psi_m
\end{equation*}
and hence  the inequality of \cite{Heinz51} implies 
\begin{equation*}
\normV{[\Gamma]_{\um}^{1/2}[g]_{\um}}^2\leqslant \Gd \normV{[\Diag(\Gw)]_{\um}^{1/2}[g]_{\um}}^2,\quad\forall g\in\Psi_m.
\end{equation*}
Thereby, \eqref{pr:upper:l3:e1-3} follows by replacing $[g]_{\um}$ by $[\Diag(\Gw)]^{-1/2}_{\um}[g]_{\um}$, that is,
\begin{equation*}
\normV{[\Gamma]_{\um}^{1/2}[\Diag(\Gw)]^{-1/2}_{\um}[g]_{\um}}^2\leqslant \Gd \normV{[g]_{\um}}^2,\quad\forall g\in\Psi_m.
\end{equation*}

Proof of \eqref{pr:upper:l3:e2}. Let $\beta\in \cW_\bw^\br$. Consider the decomposition
\begin{equation*}
\normV{\beta-\beta^m}_\hw^2\leqslant 2\{   \normV{\beta-\Pi_m \beta}_\hw^2 +\normV{\Pi_m \beta-\beta^m}_\hw^2\}.
\end{equation*}
Since $(\hw_j/\bw_j)$ is non-increasing it follows  $ \normV{\beta-\Pi_m\beta}_\hw^2 \leqslant \hw_m/\bw_m\,\normV{\beta}^2_\bw$, while we show below%
\begin{equation}\label{pr:upper:l3:e2:1}
\normV{\Pi_m \beta-\beta^m}_\hw^2\leqslant 2(1+\Gd^2)\, \hw_m/\bw_m\,\normV{\beta}^2_\bw.
\end{equation}
Consequently,  by combination of these two bounds the condition $\beta\in \cW_\bw^\br$, i.e., $\normV{\beta}^2_\bw\leqslant \br$, implies \eqref{pr:upper:l3:e2}. 
From \eqref{pr:upper:l3:e1:0} follows $\normV{\beta-\beta^m}^2_{\Gw^2}\leqslant \Gd^4 \normV{\beta-\Pi_m\beta}^2_{\Gw^2}\leqslant \Gd^4 \Gw_m^2/\bw_m \normV{\beta}_\bw^2$ because $(\Gw_j^2/\bw_j)$ is non-increasing,  and hence,
\begin{equation}\label{pr:upper:l3:e2:2}
\normV{\Pi_m\beta-\beta_m}^2_{\Gw^2}\leqslant 2\{ \normV{\beta-\beta^m}^2_{\Gw^2} + \normV{\beta-\Pi_m\beta}^2_{\Gw^2}\}\leqslant
2(1+\Gd^4) \Gw_m^2/\bw_m \normV{\beta}_\bw^2.
\end{equation}
Furthermore,  $ \normV{\Pi_m\beta-\beta^m}_\hw^2\leqslant \hw_m\Gw_m^{-2}\,\normV{\Pi_m\beta-\beta^m}^2_{\Gw^2}$ since $(\hw_j/\Gw_j^{2})$ is non-decreasing. The last estimate and \eqref{pr:upper:l3:e2:2} imply now together \eqref{pr:upper:l3:e2:1}, which completes the proof.\end{proof}

\subsection{Proofs of Section \ref{sec:ex}}\label{app:proofs:ex}
\paragraph{The mean prediction error.}
\begin{proof}[\noindent\textcolor{darkred}{\sc Proof of Proposition \ref{MPE:lower}.}] Since  $\Gamma \in \cN_{\Gw}^{\Gd}$, $\Gd\geqslant 1$,   it follows by using the inequality of \cite{Heinz51} that $\Ex\normV{\widetilde\beta-\beta}_\Gamma^2 \asymp_d \Ex \normV{\widetilde\beta-\beta}_\Gw^2$. Therefore, we can apply the general results  by considering the $\cW_\hw$-risk with $\hw=\Gw$  as a measure of the performance of an estimator of $\beta$. Furthermore, in case (i) the definition of $\bw_j^p$ and $\Gw_j$ imply together $(\bw_{\kstar}^p/\hw_{\kstar})\sum_{j=1}^{\kstar}  \hw_j/\Gw_j =  \kstar^{2a+2p+1}$. It follows that the condition on $\kstar$ and $\dstar$ given in \eqref{gen:def:m-gam} of Theorem \ref{gen:lower:theo} can be rewritten as $\kstar\sim n^{1/(2p+2a+1)}$ and $\dstar\sim n^{-(2p+2a)/(2p+2a+1)}$. On the other hand, in case (ii)   $(\bw_{\kstar}^p/\hw_{\kstar})\sum_{j=1}^{\kstar}  \hw_j/\Gw_j= \kstar^{2p+1}\exp(\kstar^{2a})$ implies that the condition on $\kstar$ and $\dstar$ writes  $\kstar\sim (\log n)^{1/(2a)}$ and $\dstar\sim n^{-1}(\log n)^{1/(2a)}$. Consequently, the lower bounds in Proposition \ref{MPE:lower} follow by applying Theorem \ref{gen:lower:theo}.\end{proof}

\begin{proof}[\noindent\textcolor{darkred}{\sc Proof of Proposition \ref{MPE:upper}.}]Note, that for sufficiently large $n$ 
the condition on $\gamma$ in Theorem \ref{gen:upper:theo} writes $\gamma=n$ because $(b_j^p)$ is increasing. Furthermore, it is easily seen that  the  additional condition \eqref{gen:upper:varphi:cond} is satisfied in the exponential case and   for all $k\geqslant 2+8/(2p+2a-1)$ also in the polynomial case. Finally, since in both cases  the condition on $m$ ensures that  $m\sim \kstar$ (see the proof of Proposition \ref{MPE:lower}) the result follows from Theorem \ref{gen:upper:theo}.\end{proof}

\paragraph{The estimation of derivatives.}
\begin{proof}[\noindent\textcolor{darkred}{\sc Proof of Proposition \ref{MSE:lower}.}] Since  for each $0\leqslant s \leqslant p$ we have $\Ex\normV{\widetilde\beta^{(s)}-\beta^{(s)}}^2 \sim \Ex \normV{\widetilde\beta-\beta}_{\bw^s}^2$ we can apply again the general results  by considering the $\cW_\hw$-risk with $\hw=\bw^s$. In case (i) the well-known  approximation $\sum_{j=1}^{m} j^{r}\sim m^{r+1}$ for $r>0$ together with the definition of $\bw_j^p$ and $\Gw_j$ imply  $(\bw_{\kstar}^p/\hw_{\kstar})\sum_{j=1}^{\kstar}  \hw_j/\Gw_j \sim  \kstar^{2a+2p+1}$. It follows that the condition on $\kstar$ and $\dstar$ given in \eqref{gen:def:m-gam} of Theorem \ref{gen:lower:theo} writes $\kstar\sim n^{1/(2p+2a+1)}$ and $\dstar\sim n^{-(2p-2s)/(2p+2a+1)}$. On the other hand, in case (ii) by applying Laplace's Method (c.f. chapter 3.7 in \cite{Olver1974}) the definition of $b_j$ and $\Gw_j$ 
imply   $(\bw_{\kstar}^p/\hw_{\kstar})\sum_{j=1}^{\kstar}  \hw_j/\Gw_j\sim  \kstar^{2p}\exp(\kstar^{2a})$ implies that the condition on $\kstar$ and $\dstar$ can be rewritten as   $\kstar\sim (\log n)^{1/(2a)}$ and $\dstar\sim n^{-1}(\log n)^{1/(2a)}$. Consequently, the lower bounds in Proposition \ref{MPE:lower} follow by applying Theorem \ref{gen:lower:theo}.\end{proof}

\begin{proof}[\noindent\textcolor{darkred}{\sc Proof of Proposition \ref{MSE:upper}.}]The proof follows in analogy to the proof of Proposition \ref{MPE:upper} and we omit the details.\end{proof}

\end{document}